\documentclass[12pt]{article}
\usepackage{amssymb,amsmath,amsfonts,amsthm,geometry,ulem,graphicx,caption,xcolor,setspace,sectsty,comment,footmisc,natbib,pdflscape,subfigure,array,multicol,multirow,float,tikz,appendix}
\usepackage[hyperindex,breaklinks]{hyperref}
\usepackage{cleveref}

\normalem

\newtheorem{theorem}{Theorem}
\newtheorem{corollary}{Corollary}
\newtheorem{proposition}{Proposition}
\newtheorem{lemma}{Lemma}

\theoremstyle{definition}
\newtheorem{definition}{Definition}
\newtheorem{example}{Example}
\newtheorem{remark}{Remark}

\newcolumntype{L}[1]{>{\raggedright\let\newline\\arraybackslash\hspace{0pt}}m{#1}}
\newcolumntype{C}[1]{>{\centering\let\newline\\arraybackslash\hspace{0pt}}m{#1}}
\newcolumntype{R}[1]{>{\raggedleft\let\newline\\arraybackslash\hspace{0pt}}m{#1}}

\geometry{left=1in,right=1in,top=1in,bottom=1in}
\tikzset{global scale/.style={
    scale=#1,
    every node/.append style={scale=#1}}
}
\hypersetup{colorlinks = true, linkcolor = black, urlcolor = black, citecolor = black}
\newcommand\posscite[1]{\citeauthor{#1}'s\ (\citeyear{#1})}
\onehalfspacing

\begin{document}
\title{Games under the Tiered Deferred Acceptance Mechanism\thanks{I am especially grateful to Federico Echenique, Haluk Ergin, Yuichiro Kamada, and Xiaohan Zhong for their guidance and support. I thank Itai Ashlagi, Yan Chen, Peter Doe, Zhiming Feng, Sam Kapon, Fuhito Kojima, Nozomu Muto, Axel Niemeyer, Juan Sebastián Pereyra, Luciano Pomatto, Al Roth, Kirill Rudov, Chris Shannon, Pan-Yang Su, Takuo Sugaya, Omer Tamuz, Qianfeng Tang, Alex Teytelboym, Quitzé Valenzuela-Stookey, Xingye Wu, Ning Yu, Jun Zhang, and the seminar participants at Caltech, UC Berkeley, Stony Brook Game Theory Conference 2024, GAMES 2024 for their helpful comments and insights. A part of this research was conducted while I was visiting the Department of Economics at UC Berkeley.}}
\author{Jiarui Xie\thanks{School of Economics and Management, Tsinghua University. Email: \href{xiejiarui21@gmail.com}{xiejiarui21@gmail.com}.}}

\date{\today}

\maketitle

\begin{abstract}
We study the tiered deferred acceptance mechanism used in school admissions, such as in China and Turkey. 
This mechanism partitions schools into tiers and applies the deferred acceptance algorithm within each tier. Once assigned, students cannot apply to schools in subsequent tiers. 
We show that this mechanism is not strategy-proof. 
In the induced preference revelation game, we find that merging tiers preserves all equilibrium outcomes, and within-tier acyclicity is necessary and sufficient for the mechanism to implement stable matchings. 
We also find that introducing tiers to the deferred acceptance mechanism may not improve student quality at top-tier schools as intended.
\\
\vspace{0in}\\
\noindent\textbf{Keywords:} school choice, sequential assignment, deferred acceptance mechanism, stability, acyclicity\\
\vspace{0in}\\
\noindent\textbf{JEL Classification:} C78, D47, D78, I20\\
\end{abstract}

\newpage
\section{Introduction}
Many matching mechanisms allocate resources sequentially, with some resources allocated first and the rest later. We study one such mechanism, known as the tiered deferred acceptance (TDA) mechanism, within the context of school choice. In this mechanism, schools are tiered, and seats are allocated sequentially by tier. Within each tier, the student-proposing deferred acceptance (DA) algorithm is used to assign seats. Students can only apply to schools within the current tier, and once assigned at the end of one tier, they cannot apply to schools in subsequent tiers.\footnote{In the DA mechanism, students first apply to their favorite schools, which then tentatively accept or reject students based on their priorities. This process repeats, with rejected students applying to their next choices, until no more students are rejected. See \Cref{sec:mech} for the formal definitions of the DA and TDA mechanisms.}

The TDA mechanism is used in various settings. In the Chinese college admission system, schools are partitioned into three tiers (in the following order): specialized institutions (e.g., military or diplomatic schools), academic universities, and vocational education colleges. Some specialized institutions and vocational colleges rank students using the national college entrance exam and school-specific tests, while other schools use only the national exam score.\footnote{\label{ft:China}In some provinces, academic universities are further partitioned into two tiers: key universities first and other universities second. In 2023, 14 provinces, including Beijing, Shanghai, and Jiangsu, did not have this extra partition; while 16 provinces, including Shanxi and Sichuan, did; only Inner Mongolia used a dynamic adjustment system. Data source: \href{https://gaokao.chsi.com.cn/z/gkbmfslq2023/zytb.jsp}{https://gaokao.chsi.com.cn/z/gkbmfslq2023/zytb.jsp}, accessed 08/14/2024. In provinces such as Guizhou and Shaanxi, admissions to specialized institutions are via the Boston mechanism. Also, in practice, students can only report a limited length of preferences within each tier. These variations and restrictions are not modeled in this paper.} 
In Turkey, high schools are partitioned into two tiers: private schools first and public schools second. In both tiers, school priorities are determined by a weighted average of nationwide entrance exam scores and their Grade Point Average.\footnote{In Turkey, students apply to private schools through a decentralized process before applying to public schools via a centralized system. The decentralized application process for private schools is not modeled in this paper. See \cite{andersson-2024} for more details of this mechanism.} 
Similar mechanisms are also used for school admissions in the US \citep{dur-2018} and Sweden \citep{andersson-2017, andersson-2024}, for hiring public officers in Japan \citep{hatakeyama-2024}, and for appointing state school teachers in Turkey \citep{dur-2018}. These mechanisms are discussed in more detail in \Cref{sec:lit}.

A natural question is whether introducing tiers to the DA mechanism can improve the welfare of top-tier schools. This is often a goal of the TDA mechanism, especially when market designers can decide whether and how to set a tier structure.
In China, for example, authorities claim that introducing a tier structure helps ``ensure the quality of students admitted to top-tier programs'' \citep{li-2020}.\footnote{In its original context, this statement refers to the motivation of dividing academic universities into key and other universities, with ``top-tier schools'' denoting key universities rather than specialized institutions.}
In line with this, to ``alleviate talent shortages'' in certain fields, a 2020 reform added a new tier before the existing three and placed related majors in this tier.\footnote{This 2020 reform refers to the ``strengthening basic disciplines plan.'' It creates an extra tier (tier 0) for majors in basic disciplines at 36 top universities. Schools hold extra exams for these majors to determine their priorities. Students can apply to only one school in this tier. We further discuss the effect of this reform in \Cref{sec:incomplete}. For more information regarding the ``strengthening basic disciplines plan,'' see Shuo Zou, \textit{Enrollment plan targets basics}, China Daily, January 16, 2020, \href{http://english.www.gov.cn/statecouncil/ministries/202001/16/content_WS5e1fbffcc6d0891feec02516.html}{http://english.www.gov.cn/statecouncil/ministries/202001/16/content\_WS5e1fbffcc6d0891feec02516.html}, accessed 08/14/2024.}
Besides, even in markets where tiering is inevitable, such as those in Turkey, Japan, and the US, this question helps determine the order of the tiers and which schools to put in each tier.

Indeed, if students report their preferences truthfully, this question has a positive answer. This is because top-tier schools now face less competition from other schools. The problem is that, as shown in \Cref{exp:TDA}, students do not have an incentive to report truthfully under the TDA mechanism.\footnote{In \Cref{sec:st-pf}, we characterize the preference domain under which the TDA mechanism is strategy-proof.}
The necessity for strategic misreporting has been highlighted by the Chinese media, as shown in the following quote from Taizhou Daily:\footnote{Lingjia Yan, \textit{Early Tier Applications Require a Dialectical Approach}, Taizhou Daily, June 28, 2016, \href{https://www.tzc.edu.cn/info/1072/44958.htm}{https://www.tzc.edu.cn/info/1072/44958.htm}, accessed 08/14/2024.}
\begin{quote}
    ``When filling in early tier applications, students must be cautious, especially those with high scores. \ldots. Think carefully about whether you are truly keen on these top-tier schools, or if you are more inclined towards schools in the later rounds.''
\end{quote}
In contrast, the DA mechanism is strategy-proof, i.e., reporting truthfully is a dominant strategy for students.

As we will also show in \Cref{exp:TDA}, when students misreport their preferences, the TDA mechanism may cause top-tier schools to be worse off. 
That is, comparing to the dominant strategy equilibrium of the DA mechanism, lower-ranked students are admitted to top-tier schools in the equilibrium of the preference revelation game induced by the TDA mechanism. 
Although our analysis focuses mainly on complete information settings, this negative result is robust to incomplete information settings, where students are uncertain about others' preferences or schools' priorities (see \Cref{sec:incomplete} for examples).

\begin{example} \label{exp:TDA}
    Consider three students, $1, 2, 3$, and three schools, $a, b, c$, each with one seat. School $a$ is in tier 1, and schools $b$ and $c$ are in tier 2. Student preferences, $R_1, R_2, R_3$, as well as school priorities, $\succsim_a, \succsim_b, \succsim_c$, are as follows:
    \begin{multicols}{3}
    \begin{center}
    \begin{tabular}{c|c|c}
        \textbf{$R_1$} & \textbf{$R_2$} & \textbf{$R_3$} \\
        \hline
        $c$ & $b$ & $b$ \\
        $b$ & $c$ & $a$ \\
        $a$ & $a$ & $c$ \\
        $1$ & $2$ & $3$ \\
    \end{tabular}
    \end{center}
   
    \columnbreak
    \begin{center}
    \begin{tabular}{c|c|c}
        \textbf{$\succsim_a$} & \textbf{$\succsim_b$} & \textbf{$\succsim_c$} \\
        \hline
        $1$ & $1$ & $3$ \\
        $3$ & $2$ & $1$ \\
        $2$ & $3$ & $2$ \\
    \end{tabular}
    \end{center}

    \columnbreak
    \begin{center}
    \begin{tabular}{c|c|c}
        \textbf{$Q_1$} & \textbf{$Q_2$} & \textbf{$Q_3$} \\
        \hline
        $c$ & $b$ & $b$ \\
        $b$ & $c$ & $c$ \\
        $1$ & $a$ & $3$ \\
        $a$ & $2$ & $a$ \\
    \end{tabular}
    \end{center}
    \end{multicols}
    First, we show that the TDA mechanism is not strategy-proof. 
    Suppose students report truthfully, the TDA outcome is $((1, a), (2, b), (3, c))$. However, student $1$ can benefit by misreporting $Q_1$ and obtaining $((1, c), (2, b), (3, a))$ if students $2$ and $3$ still report truthfully. This shows that under the TDA mechanism, when students prefer schools in later tiers, not applying to top-tier schools can be a profitable deviation.
    
    Second, we show that the TDA mechanism can result in lower-ranked students at top-tier schools compared to the DA mechanism. 
    Let us consider the preference revelation game induced by the TDA mechanism. One Nash equilibrium of this game is $Q$, with the TDA equilibrium outcome $((1, c), (2, a), (3, b))$. Under the DA mechanism, assuming everyone reports truthfully, the DA outcome is $((1, c), (2, b), (3, a))$. Now, note that under the DA mechanism, the top-tier school $a$ gets student $3$, but in a Nash equilibrium outcome under the TDA mechanism, $a$ gets the lower-ranked student $2$.\footnote{Let us restate this example using the notation that we defined later in \Cref{sec:model}. Here, the tier structure is $t := (1, 2, 2)$. The TDA Nash equilibrium outcome under $Q$, $((1, c), (2, a), (3, b))$, is denoted by $TDA(t)(Q)$. The DA outcome (i.e., the SOSM), $((1, c), (2, b), (3, a))$, is denoted by $DA(R)$.}

    In fact, if we look at all possible equilibria under the TDA mechanism, there are only two equilibrium outcomes: $((1, c), (2, a), (3, b))$ and $((1, c), (2, b), (3, a))$. Thus, in this example, school $a$ is never strictly better off in any equilibrium under the TDA mechanism.\footnote{Even when we consider mixed Nash equilibrium outcomes, school $a$ still cannot be strictly better off. This is because listing $a$ as unacceptable is student 1's dominant strategy under the TDA mechanism.}$\hfill\square$
\end{example}

We have four main results. The first three results analyze the Nash equilibrium outcomes under the TDA mechanism, as summarised in \Cref{fig:venn}. 
These results lead to the fourth result, which formalizes the second message from \Cref{exp:TDA}: contrary to its intention, switching from the DA to the TDA mechanism may harm top-tier schools under most tier structures.
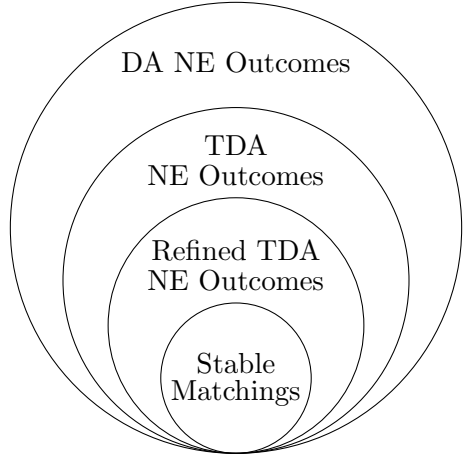
\begin{figure}
    \centering
    \begin{tikzpicture}[global scale=1]
        \draw (0,0) circle [radius=3cm]; 
        \draw (0,-0.7) circle [radius=2.3cm]; 
        \draw (0,-1.3) circle [radius=1.7cm]; 
        \draw (0,-2) circle [radius=1cm]; 
        \node[font=\small] at (0,2.2) {DA NE Outcomes};
        \node[font=\small] at (0,1.1) {TDA};
        \node[font=\small] at (0,0.7) {NE Outcomes};
        \node[font=\small] at (0,-0.3) {Refined TDA};
        \node[font=\small] at (0,-0.7) {NE Outcomes};
        \node[font=\small] at (0,-1.8) {Stable};
        \node[font=\small] at (0,-2.2) {Matchings};
    \end{tikzpicture}
    \caption{Set inclusion relation when within-tier acyclicity is not satisfied}
    \label{fig:venn}
\end{figure}

Our first result shows that all stable matchings are TDA Nash equilibrium outcomes (\Cref{thm:nested}). A matching is stable if no student prefers a different school over his current match unless that school is already filled with higher-priority students. Our finding generalizes \posscite{romeromedina-1998} result that all stable matchings are DA Nash equilibrium outcomes. However, not all equilibrium outcomes under the TDA mechanism are stable.

Our second result shows the nested structure of the set of TDA Nash equilibrium outcomes: all TDA equilibrium outcomes under a finer tier structure are equilibrium outcomes under a coarser tier structure (\Cref{thm:nested}). This result indicates that if undesirable equilibrium outcomes exist, merging tiers alone cannot resolve the issue.

Our third result identifies within-tier acyclicity as a necessary and sufficient condition for the TDA mechanism to implement stable matchings in equilibrium (\Cref{thm:implementation}). This condition weakens \posscite{ergin-2002} acyclicity definition, requiring that acyclicity only holds within each tier. Then, the finding builds on \posscite{haeringer-2009} insight that DA's Nash equilibrium outcomes are stable if and only if acyclicity is satisfied. Our result provides a further critique of the TDA mechanism. If the tier structure satisfies within-tier acyclicity—implying the priorities within each tier are sufficiently similar—then changing the order of tiers does not affect the equilibrium outcomes. This means that schools in earlier tiers do not gain any advantage from being ranked higher than other schools.

Our fourth result provides a unique solution to guarantee the TDA mechanism benefits certain schools: each tier must contain only one school (\Cref{thm:unique}). Formally, if this condition is not met, then for any school that we aim to benefit, there exists a school choice problem (a combination of quotas, priorities, and preferences) under which at least one TDA equilibrium leads to a worse outcome than under DA's dominant strategy equilibrium. Under this tier structure, all schools are weakly better off, and the ranking of tiers does not affect the equilibrium outcomes. 
However, since such a tier structure is impractical, we conclude that the TDA mechanism harms the schools that we want to benefit in general.

The rest of the paper is organized as follows. \Cref{sec:model} defines the school choice model and the DA and TDA mechanisms. \Cref{sec:nested} analyzes the Nash equilibrium outcomes under the TDA mechanism and its nested structure. \Cref{sec:implementation} studies the implementability of stable matchings. \Cref{sec:goal} analyzes the welfare of schools under the TDA mechanism and provides a unique solution for achieving TDA's goal. \Cref{sec:dis} discusses several further topics. \Cref{sec:lit} reviews related literature, and \Cref{sec:conclude} concludes. All proofs are relegated to \Cref{apx:A}. Additional examples are provided in \Cref{apx:B}.

\section{Model} \label{sec:model}
Following \cite{abdulkadiroglu-2003}, we define a \textbf{school choice problem} $E$ to be a 5-tuple $(I, S, q, \succsim, R)$ which includes:
\begin{itemize}
    \item $I$: a finite set of \textbf{students}.
    \item $S$: a finite set of \textbf{schools}.
    \item $q = (q_s)_{s \in S}$: a vector of \textbf{quotas} for each school, where $q_s \in \mathbb N$.
    \item $\succsim = (\succsim_s)_{s \in S}$: a \textbf{priority profile}, with each $\succsim_s$ being a linear order over $I$. $\succ_s$ denotes the strict relation.
    \item $R = (R_i)_{i \in I}$: a \textbf{preference profile}, with each $R_i$ being a linear order over $S \cup \{i\}$. $P_i$ denotes the strict relation.
\end{itemize}

Throughout the paper, $I$ and $S$ are fixed. Thus, unless otherwise stated, we shorthand a school choice problem $E$ as $(q, \succsim, R)$.

Let $\mathcal{R}$ denote the set of linear orders over $S \cup \{i\}$, i.e., the set of possible preferences of a student. Unless otherwise specified, when referring to a preference, $R_i$, the implicit domain is $\mathcal{R}$.

A \textbf{truncated preference} $R^{S^\prime}_i$ with respect to a subset of schools $S^\prime \subseteq S$ is a linear order over $S^\prime \cup \{i\}$ which is consistent with the original preference $R_i$, i.e., $s \mathrel{R^{S^\prime}_i} s^\prime \iff s \mathrel{R_i} s^\prime$ for any $s, s^\prime \in S^\prime \cup \{i\}$.

A set of schools $S$ is \textbf{tiered} if schools are partitioned to ${\{S_k\}}_{k \in \{1, \ldots, T\}}$, where each $S_k \subseteq S$ is a nonempty subset of schools and $T \in \mathbb N_{+}$ is the total number of partitions. We denote $S_k$ as tier $k$ and any school $s \in S_k$ as a tier $k$ school. 
A \textbf{tier structure} is defined by $t = (t_s)_{s \in S}$, where $t_s \in \{1, \ldots, T\}$ denotes the tier to which school $s$ belongs, i.e., $s \in S_{t_s}$.\footnote{Alternatively, we can define tier $k$ using the tier structure $t$ as $S_k := \{s \in S: t_s = k \}$.}

A tuple consisting of quotas and priority profile $(q, \succsim)$ is referred to as a \textbf{priority structure}. A tuple consisting of quotas, priority profile, and tier structure $(q, \succsim, t)$ is referred to as a \textbf{generalized priority structure}.

The outcomes of the school choice problem are matchings. A \textbf{matching} is a function $\mu: I \rightarrow S \cup I$ satisfying $\mu(i) \in S \cup \{i\}$ for any student $i \in I$ and $|\mu^{-1}(s)| \le q_s$ for any school $s \in S$.

A matching $\mu$ is \textbf{stable} with respect to preferences $R$ if it satisfies: (1) \textbf{individual rationality}: $\mu(i)\mathrel{R_i}i$ for any student $i \in I$; (2) \textbf{no justified envy}: there are no students $i, j \in I$ and school $s \in S$ such that $s = \mu(j)$, $s \mathrel{P_i} \mu(i)$, and $i \succ_s j$; (3) \textbf{non-wastefulness}: there are no student $i \in I$ and school $s \in S$ such that $s \mathrel{P_i} \mu(i)$ and $q_s > |\mu^{-1}(s)|$. If condition (2) or (3) is violated, the pair $(i, s)$ containing the corresponding student $i$ and school $s$ in the above definition is called a \textbf{blocking pair}. In other words,
a blocking pair is a pair of a student and a school who want to be matched with each other rather than staying with their current assignments.
The set of all stable matchings with respect to preferences $R$ in school choice problem $E=(q, \succsim, R)$ is denoted by $\mathcal{S}(E)$.

A \textbf{mechanism} $f$ is a function that assigns a matching $f(E)$ for every school choice problem $E = (q, \succsim, R)$, where $f(E)(i)$ is the school (or empty seat) assigned to student $i$. When the priority structure $(q, \succsim)$ is fixed, we denote only the preferences $R$ as the input of a mechanism.

A mechanism $f$ is \textbf{strategy-proof} if, for any school choice problem $E = (q, \succsim, R)$, $i \in I$ and $Q_i \in \mathcal{R}$, $f(q, \succsim, R)(i) \mathrel{R_i} f(q, \succsim, Q_i, R_{-i})(i)$.

\subsection{Mechanisms} \label{sec:mech}
To formally define the DA and TDA mechanisms, we first define the DA algorithm.

\subsubsection*{The Deferred Acceptance Algorithm \citep{gale-1962}}
\begin{quote}
    The algorithm takes $(I, S, q, \succsim, R)$ as the input.

    \textit{Step 1:} Each student $i \in I$ proposes to his best alternative in $S \cup \{i\}$ according to $R_i$. Each school $s$ tentatively accepts the $q_s$-highest ranked students according to $\succsim_s$, among those that have proposed to it.

    \textit{Step $k > 1$:} Each student $i \in I$ who has not been tentatively accepted in Step $k-1$ proposes to his best alternative in $S \cup \{i\}$ according to $R_i$, among those to which he has not previously proposed. Each school $s$ tentatively accepts the $q_s$-highest ranked students according to $\succsim_s$, among those that have proposed to it in this step or were tentatively accepted by $s$ in Step $k-1$.

    The algorithm terminates when no new proposals are made, and outputs the final matching.
\end{quote}

The \textbf{deferred acceptance mechanism} takes a school choice problem $(q, \succsim, R)$ as the input, and outputs the outcome of the DA algorithm under $(I, S, q, \succsim, R)$.

We use $DA$ to denote the DA mechanism. When there is no ambiguity in the priority structure $(q, \succsim)$, we shorthand the outcome under a preference profile $R$ as $DA(R)$.

The \textbf{student-optimal stable matching} (SOSM) is a stable matching where all students weakly prefer their outcomes over any other stable matchings. However, all schools weakly prefer other stable matchings to the SOSM. The SOSM can be obtained by the DA mechanism when everyone is reporting truthfully \citep{gale-1962}.

\subsubsection*{The Tiered Deferred Acceptance Mechanism under the Tier Structure $t$}

\begin{quote}
The mechanism takes a school choice problem $(q, \succsim, R)$ as the input.

The outcome $\mu$ is determined via the following algorithm:

\textit{Round 1:} Run the DA algorithm under $(I, S_1, (q_s)_{s \in S_1}, (\succsim_s)_{s \in S_1}, (R^{S_1}_i)_{i \in I}$, and denote the output by $\mu_1$. That is, each student $i \in I$ proposes to his acceptable schools in tier 1 according to his truncated preference for them. For any student $i \in I$, if $\mu_1(i) \neq i$, then $\mu(i) = \mu_1(i)$.

\textit{Round $k > 1$:} Denote the set of students who are unmatched after Round $k-1$ as $I_k := \{i \in I: \mu_{k-1}(i) = i\}$. Run the DA algorithm under $(I_k, S_k, (q_s)_{s \in S_k}, (\succsim_s)_{s \in S_k}, (R^{S_k}_i)_{i \in I_k}$, and denote the output by $\mu_k$. That is, each currently unmatched student $i \in I_k$ proposes to his acceptable schools in tier $k$ according to his truncated preference for them. For any student $i \in I_k$, if $\mu_k(i) \neq i$, then $\mu(i) = \mu_k(i)$.

The algorithm terminates after round $T$. Then, for any $i \in I_T$, $\mu(i) = \mu_T(i)$.

The mechanism outputs the outcome $\mu$ of the algorithm.
\end{quote}

Each tier structure $t$ induces a new TDA mechanism.
We use $TDA(t)$ to denote the TDA mechanism under the tier structure $t$. Similarly, when there is no ambiguity in the priority structure $(q, \succsim)$, we shorthand the outcome under a preference profile $R$ as $TDA(t)(R)$.

We distinguish the TDA mechanism from two similar mechanisms, the ``parallel'' mechanism \citep{chen-2017} and the iterative DA mechanism \citep{bo-2022}, in \Cref{sec:similar}.

\section{Nash Equilibria under the TDA Mechanism} \label{sec:nested}
As shown in \Cref{exp:TDA}, the TDA mechanism is not strategy-proof. To predict the equilibrium outcome, we study the induced preference revelation game. In this section, we show the equilibrium outcomes are nested with respect to the tier structure.

In this game, school priorities and tiers are fixed, while students can report their preferences strategically. Given any school choice problem $E = (q, \succsim, R)$, a preference profile $Q \in \mathcal{R}^{|I|}$ is a (pure) \textbf{Nash equilibrium of the preference revelation game induced by the mechanism $f$ at $E$} if, for any student $i \in I$ and preference $Q^\prime_i \in \mathcal{R}$, $f(Q_i, Q_{-i})(i) \mathrel{R_i} f(Q^\prime_i, Q_{-i})(i)$.

To avoid ambiguity, in the discussion of Nash equilibrium, we use $R$ to denote the true preference profile defined in the school choice problem $E = (q, \succsim, R)$, and $Q$ to denote an arbitrary preference profile. The same notation applies to the individual preferences $R_i$ and $Q_i$. The implicit domain for $Q_i$ is $\mathcal{R}$. 

Let us denote $\mathcal E^{f}(E)$ as the set of all Nash equilibria in the preference revelation game induced by mechanism $f$ at school choice problem $E = (q, \succsim, R)$, and $\mathcal O^{f}(E)$ as the set of Nash equilibrium outcomes under $f$ at $E$, i.e., $\mathcal O^{f}(E) := \{f(q, \succsim, Q): Q \in \mathcal E^{f}(E) \}$.

We show two results in \Cref{thm:nested}. The first one is that stable matchings are TDA Nash equilibrium outcomes. This result confirms the existence of Nash equilibria under the TDA mechanism with any tier structure.

Before presenting our second result, we define a partial order over tier structures. A tier structure is a refinement of another, if the original partition of schools is refined, and the original order of tiers is maintained.

\begin{definition}
    A tier structure $t$ is a \textbf{refinement} of another tier structure $t^\prime$ if, for any two schools $a, b \in S$, $t^\prime_a > t^\prime_{b}$ implies $t_a > t_{b}$.
\end{definition}

We are now ready to present the second result. If a matching is a TDA Nash equilibrium outcome under the refined tier structure, then it is also a Nash equilibrium outcome under the coarser tier structure, i.e., TDA Nash equilibrium outcomes are nested with respect to the partial order defined over the tier structure.

\begin{theorem} \label{thm:nested}
    For any school choice problem $E$ and tier structure $t$, the following holds.
    \begin{enumerate}
        \item The set of stable matchings under the preferences in $E$ is a subset of the set of Nash equilibrium outcomes under the TDA mechanism with $t$.
        \item For any tier structure $t^\prime$ such that $t$ is a refinement of $t^\prime$, the set of Nash equilibrium outcomes under the TDA mechanism with $t$ is a subset of the set of Nash equilibrium outcomes under the TDA mechanism with $t^\prime$, i.e., 
    \end{enumerate}
    $$\mathcal S(E) \subseteq \mathcal O^{TDA(t)}(E) \subseteq \mathcal O^{TDA(t^\prime)}(E).$$
\end{theorem}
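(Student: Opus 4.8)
The plan is to prove the two inclusions separately, since they require rather different ideas.

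For the first inclusion, $\mathcal S(E) \subseteq \mathcal O^{TDA(t)}(E)$, I would fix a stable matching $\mu$ and construct a preference profile $Q$ that both yields $\mu$ under $TDA(t)$ and is a Nash equilibrium. The natural candidate, mimicking \cite{romeromedina-1998}, is to have each student $i$ report the single-school list ranking only $\mu(i)$ as acceptable (i.e.\ $Q_i$ declares every school except $\mu(i)$ unacceptable). First I would verify that $TDA(t)(Q) = \mu$: when every student applies only to his assigned school, in each round the tier-$k$ schools receive exactly the students $\mu$ assigns to them, and since $|\mu^{-1}(s)| \le q_s$ no one is rejected, so the round-$k$ assignment coincides with $\mu$ restricted to tier $k$. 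Then I would check the equilibrium condition: fix $i$ and any deviation $Q_i'$. If $i$ is matched by $\mu$ to a tier-$k$ school, then under $(Q_i', Q_{-i})$ all other students still occupy their $\mu$-seats in their respective tiers, so the only schools $i$ can possibly obtain in rounds $1,\dots,k-1$ are schools $s$ with $\mu^{-1}(s)$ already full of students all of whom $i$ does not outrank — here stability (no justified envy) of $\mu$ is exactly what guarantees $i$ cannot displace anyone, so $i$ reaches round $k$ unmatched and can do no better than $\mu(i)$ (non-wastefulness handles the case where $i$ tries an empty seat in an earlier tier, and individual rationality handles staying unmatched). For an unmatched student the same argument applies across all $T$ rounds. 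This step is mostly careful bookkeeping across rounds.

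For the second inclusion, $\mathcal O^{TDA(t)}(E) \subseteq \mathcal O^{TDA(t')}(E)$ where $t$ refines $t'$, I would take $\mu \in \mathcal O^{TDA(t)}(E)$ witnessed by an equilibrium profile $Q$ with $TDA(t)(Q) = \mu$, and build an equilibrium $Q'$ for $TDA(t')$ giving the same $\mu$. By induction it suffices to treat the case where $t'$ is obtained from $t$ by merging two consecutive tiers, say tiers $k$ and $k+1$ collapse into one tier of $t'$. The key structural fact I would establish is that running the $t$-procedure on rounds $k$ and $k+1$ in sequence produces the same set of finalized (student, school) pairs as running one DA round on the union of those two tiers' schools, provided students are made to report, within the merged tier, their tier-$k$ schools strictly above their tier-$k+1$ schools (preserving the within-tier $Q$-orders). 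So I would define $Q'_i$ by concatenating $i$'s tier-$k$ list (under $Q_i$) followed by his tier-$k+1$ list, and leaving all other tiers' lists as in $Q_i$. Then $TDA(t')(Q') = TDA(t)(Q) = \mu$ should follow from this structural fact together with the observation that the rounds before $k$ and after $k+1$ are untouched.

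The main obstacle — and the place I'd spend the most care — is verifying that $Q'$ is a Nash equilibrium for $TDA(t')$, not merely that it reproduces $\mu$. The subtlety is that merging tiers $k$ and $k+1$ enlarges a student's deviation space: under $t'$ a student matched to a tier-$k+1$ school may, via a clever report, try to grab a tier-$k$ school that was "protected" by the tier boundary in $t$. I would argue that any deviation $Q_i''$ available under $t'$ can be simulated by a deviation under $t$ that does (weakly) as well for $i$: essentially, any outcome $i$ can secure in the merged DA round — which is itself a DA round, hence strategyproof from $i$'s standpoint in the sense that truthful ordering of whatever set he ends up considering is optimal — corresponds to an outcome reachable under the two-round $t$-procedure by reporting the appropriate split lists. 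Because $Q$ was an equilibrium under $t$, none of those simulated deviations helps, so none of the $t'$-deviations helps either. Care is needed because the set of students competing in the merged round under $t'$ after a deviation need not equal the union of the competitors in rounds $k$ and $k+1$ under the corresponding $t$-deviation; I expect to handle this by a monotonicity/rejection-chain argument showing that the merged round's rejections are a "superset" in the right sense, so $i$'s attainable set only shrinks. (The paper's footnote about SPNE vs.\ Nash, and the remark that it discusses this in "Step 1 of the proof of the second part," suggests the authors route around some of this via the SPNE formulation, which may be the cleaner bookkeeping device than raw Nash.)
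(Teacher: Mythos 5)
Your first inclusion is essentially the paper's own argument: report only $\mu(i)$ as acceptable, observe that every round terminates immediately so $TDA(t)(Q)=\mu$, and use no justified envy, non-wastefulness, and individual rationality to kill every deviation. That part is fine and matches the paper.

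For the second inclusion you take a genuinely different route, and the place you yourself flag as the hard part is where your sketch does not yet close. The paper does not induct on merging consecutive tiers; it defines a single global ``reshuffling'' map $(Q,t)\mapsto\widetilde Q$ that reorders each student's acceptable schools to be consistent with $t$ while preserving within-tier rankings, and proves $TDA(t)(Q)=TDA(t)(\widetilde Q)=DA(\widetilde Q)$ (your ``key structural fact,'' but applied to the whole tier structure at once, reducing everything to plain DA rather than to a merged DA round). The decisive step is then a lemma you do not have: if $Q$ is a $TDA(t)$ equilibrium, then $DA(\widetilde Q_i,\widetilde Q_{-i})(i)=DA(R_i,\widetilde Q_{-i})(i)$, proved by letting $i$ deviate to the list whose only acceptable school is $DA(R_i,\widetilde Q_{-i})(i)$ (Roth and Sotomayor's Lemma 4.8) and combining with strategyproofness of DA. In other words, in any $TDA(t)$ equilibrium each student is \emph{already} getting the best outcome that any report whatsoever can secure against $\widetilde Q_{-i}$ under plain DA. Since $\widetilde Q$ is consistent with both $t$ and the coarser $t'$, any deviation $Q_i''$ under $TDA(t')$ produces $DA(\bar Q_i'',\widetilde Q_{-i})(i)$ for some report $\bar Q_i''$, and strategyproofness of DA rules out all such deviations in one stroke --- no round-by-round simulation, and no comparison of competitor sets, is needed. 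Your proposed closing device (a monotonicity/rejection-chain argument that the merged round's rejections are a ``superset'') is the wrong tool for this step and is where your argument would stall; the correct lever is the single-school-report trick plus DA strategyproofness. Your pairwise-merging induction could in principle be completed with a localized version of that same lever (the best outcome in the merged round is attainable by a tier-consistent single-school list, hence simulable under $t$), but as written the step ``none of the $t'$-deviations helps'' is asserted rather than proved.
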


We now provide a proof sketch for \Cref{thm:nested}. An intuition is provided in \Cref{rmk:intuition}. 
First, we sketch the proof of part 1 of \Cref{thm:nested}. For stable matching $\mu$, let each student $i$ report only his assignment $\mu(i)$ as acceptable. It can be checked that this constructed preference profile is a Nash equilibrium under the TDA mechanism. The reasoning is further explained in \Cref{rmk:stable}. 

Second, we sketch the proof of part 2 of \Cref{thm:nested} in three steps.\footnote{The three steps in this proof sketch are not the three steps in the proof in \Cref{apx:A}. Specifically, steps 1 and 2 here correspond to step 1 in \Cref{apx:A}, while step 3 here corresponds to steps 2 and 3 there.} 
In the first step, we provide a condition for the TDA mechanism to be outcome-equivalent to the DA mechanism. This condition requires the reported preferences to be aligned with the tier structure, meaning that students always prefer schools in earlier tiers over those in later tiers. The result is true because under an aligned preference profile, in each step the two mechanisms yield the same outcome.\footnote{We further discuss the alignment between preferences and tier structure in \Cref{sec:st-pf}.}

In the second step, we find a further connection between the outcomes of the two mechanisms. We construct a mapping $(Q, t) \mapsto \widetilde Q$, which reshuffles preferences $Q$ into $\widetilde Q$ according to the tier structure, i.e., making each reshuffled preference $\widetilde Q_i$ aligned with $t$ while maintaining the relative ranking of schools inside each tier. Two implications follow. First, as the TDA mechanism only uses preference inside each tier, we have $TDA(t)(Q) = TDA(t)(\widetilde Q)$. Second, since each $\widetilde Q_i$ is aligned with $t$, we have $TDA(t)(\widetilde Q) = DA(\widetilde Q)$. Thus, we obtain the desired connection: $TDA(t)(Q) = DA(\widetilde Q)$. 

In the third step, we show that any reshuffled equilibrium $\widetilde Q$ under a finer tier structure $t$ is an equilibrium under a coarser tier structure $t^\prime$, i.e., $Q \in \mathcal{E}^{TDA(t)}(E) \Rightarrow \widetilde Q \in \mathcal{E}^{TDA(t^\prime)}(E)$. This result directly implies part 2 of \Cref{thm:nested}. The proof is by showing $\widetilde Q$ is, in fact, a Nash equilibrium under the DA mechanism. Then, since $\widetilde Q$ is aligned with both $t$ and $t^\prime$, we show it is also a TDA equilibrium under the coarser tier structure $t^\prime$.

\begin{remark}\label{rmk:aligned}
    When students' true preferences align with the tier structure, the TDA mechanism is strategy-proof. This is because the TDA mechanism is outcome-equivalent to the DA mechanism under this condition.
    In fact, as we will show in \Cref{sec:st-pf}, this condition is also necessary for the TDA mechanism to be strategy-proof (\Cref{prop:st-pf}).$\hfill\square$
\end{remark}

\begin{remark}\label{rmk:stable}
    The TDA mechanism can be viewed as a special version of the DA mechanism, where students are limited to reporting preferences that align with the tier structure.
    This observation explains why all stable matchings are TDA equilibrium outcomes. \cite{haeringer-2009} show that for any stable matching, each student reporting only his assignment as acceptable consists of a Nash equilibrium under the DA mechanism. Then, since such an equilibrium meets the preference restrictions of the TDA mechanism, it is also a Nash equilibrium under the TDA mechanism.\footnote{Note that this observation does not imply all reshuffled TDA Nash equilibria are DA Nash equilibria. This is because the best response in a restricted preference domain may not be the best response in an unrestricted domain.}$\hfill\square$
\end{remark}

One implication for \Cref{thm:nested} is that there is a limitation in selecting Nash equilibrium outcomes. \Cref{thm:nested} suggests that some equilibrium outcomes can be eliminated by refining the tier structure. However, this method only applies to unstable outcomes. Thus, if a market designer wants to get rid of a stable matching in equilibrium, adjusting the tier structure cannot achieve this goal.

\section{Implementation of Stable Matchings} \label{sec:implementation}
As shown in \Cref{exp:TDA}, not all TDA Nash equilibrium outcomes are stable.\footnote{For example, in the Nash equilibrium outcome $TDA(t)(Q)$, $(2, b)$ forms a blocking pair. This observation aligns with \posscite{sotomayor-2008} example that not all equilibrium outcomes under the DA mechanism are stable.} In this section, we introduce the within-tier acyclicity condition, which ensures the stability of all TDA Nash equilibrium outcomes. 

This condition is derived from the acyclicity condition first introduced by \cite{ergin-2002}. Simply put, if a priority structure is acyclic, then no student can block others' assignments without affecting his assignment.

\begin{definition}\label{def:acyclicity}
    Given priority structure $(q, \succsim)$, a \textbf{cycle} consists of distinct schools $a, b \in S$ and students $i, j, k \in I$ such that:
    \begin{enumerate}
        \item $i \succ_a j \succ_a k$ and $k \succ_{b} i$, and
        \item there exist (possibly empty) disjoint sets of students $I_a, I_b \subset I \setminus \{i, j, k\}$, such that $I_a \subset U_a(j)$, $I_b \subset U_b(i)$, $|I_a| = q_a - 1$, and $|I_b| = q_b - 1$, where $U_s(i) := \{j \in I: j \succ_s i \}$.
    \end{enumerate}
    A priority structure $(q, \succsim)$ is \textbf{acyclic} if it has no cycles. 
\end{definition}

We further weaken the acyclicity condition to within-tier acyclicity, which only requires there are no cycles among schools inside each tier.  

For any tier $k \in \{1, \ldots, T\}$, we denote the quotas and priorities of tier $k$ schools by $q_{S_k} := (q_s)_{s \in S_k}$ and $\succsim_{S_k} := (\succsim_s)_{s \in S_k}$, respectively.

\begin{definition}\label{def:withinacyclicity}
    A generalized priority structure $(q, \succsim, t)$ is \textbf{within-tier acyclic} if, for any tier $k \in \{1, \ldots, T\}$, the priority structure $(q_{S_k}, \succsim_{S_k})$ is acyclic.
\end{definition}

Our third result shows that within-tier acyclicity is a necessary and sufficient condition for the TDA mechanism to implement the stable correspondence in Nash equilibria. This result generalizes \posscite{haeringer-2009} finding that acyclicity is the necessary and sufficient condition for the DA mechanism to implement stable matchings in Nash equilibria. 

\begin{theorem}\label{thm:implementation}
    A generalized priority structure $(q, \succsim, t)$ is within-tier acyclic if and only if, for any school choice problem $E$ with $(q, \succsim)$, the set of stable matchings under the preferences in $E$ is equal to the set of Nash equilibrium outcomes under the TDA mechanism with $t$, i.e., $$\mathcal{S}(E) = \mathcal{O}^{TDA(t)}(E).$$
\end{theorem}

We provide an intuition for the ``only if'' direction in three steps. That is, the existence of an unstable TDA Nash equilibrium outcome implies cycles in some tiers.

In the first step, we show that the instability of Nash equilibrium outcomes can only come from justified envy. This statement follows from \posscite{haeringer-2009} result that all DA Nash equilibrium outcomes are individually rational and non-wasteful. Combining it with \Cref{thm:nested}, we know for any unstable TDA Nash equilibrium outcome $\mu$, there exists a blocking pair $(i,s)$. 

In the second step, we show the blocking pair comes from a (potential) rejection chain. Let us consider what happens if student $i$ tries to ``realize'' this blocking pair. That is, $i$ reports another preference where he can propose to $s$ before being finally matched. Since $\mu$ is a Nash equilibrium outcome, $i$ must be rejected by $s$ in the final matching. Thus, when $i$ proposes to $s$, he is either rejected immediately or tentatively accepted, with the latter case triggering a rejection chain that eventually leads to $i$ being rejected by $s$.\footnote{Let us use \Cref{exp:TDA} to see what is a rejection chain. Under $TDA(t)(Q)$, $(2,b)$ forms a blocking pair, but when student $2$ only reports school $b$ as acceptable, $b$ rejects student $3$, who then applies to $c$, causing $c$ to reject student 1. Subsequently, student 1 applies to $b$, leading $b$ to reject student 2. See \cite{kesten-2010} for more examples and applications of rejection chains.}

In the third step, we note two points regarding this rejection chain. First, the rejection chain can only involve schools within $s$'s tier. Suppose otherwise the chain involves a school $\hat s$ in a later tier; then once it reaches $\hat s$, the matching for $s$ has been finalized, and $i$ cannot be rejected again. 
Second, the rejection chain comes from a cyclic priority structure. This is because acyclicity implies non-bossiness \citep{ergin-2002}, and the existence of a rejection chain violates the latter.

\begin{remark} \label{rmk:intuition}
    Let us provide an intuition for \Cref{thm:nested} using the argument above. We answer two questions here: (1) why introducing the tier structure eliminates some equilibrium outcomes; (2) why introducing the tier structure does not create more equilibrium outcomes.

    For the first question, the above argument applies. We know that unstable matchings come from within-tier cycles, and refining the tier structure eliminates some of these cycles. Therefore, the unstable matchings that come from these cycles are eliminated as well.

    For the second question, however, this argument does not directly apply. This is because, under a given school choice problem, not all cycles can form blocking pairs in equilibrium. Thus, we turn to the proof of \Cref{thm:nested}, where we show that any reshuffled equilibrium under the finer tier structure is also an equilibrium under the coarser tier structure. Combining the fact that this reshuffled equilibrium yields the same outcome under both tier structures, we know if a cycle forms a blocking pair through an equilibrium strategy profile under the finer tier structure, it can form the same blocking pair by the reshuffled equilibrium strategy profile (which is an equilibrium here) under the coarser tier structure. Therefore, if an additional unstable TDA equilibrium outcome exists under the finer tier structure, we can first identify the cycles that cause the instability, then use these cycles to replicate the same equilibrium outcome under the coarser tier structure - a contradiction.$\hfill\square$
\end{remark}

We show a corollary of \Cref{thm:implementation}. When within-tier acyclicity is satisfied, the choice of tier structure is irrelevant to the set of TDA Nash equilibrium outcomes, because the latter always equals the set of stable matchings.

\begin{corollary}
    \label{coro:relabel}
    For any school choice problem $E$ and tier structures $t, t^\prime$, if $(q, \succsim, t)$ and $(q, \succsim, t^\prime)$ satisfy within-tier acyclicity, then the 
    sets of Nash equilibrium outcomes under both the TDA mechanism with $t$ and the TDA mechanism with $t^\prime$ are equal to the set of stable matchings under the preference in $E$, i.e., 
    $$\mathcal{O}^{TDA(t)}(E) = \mathcal{O}^{TDA(t^\prime)}(E) = \mathcal{S}(E).$$
\end{corollary}

A special case of \Cref{coro:relabel} is that if within-tier acyclicity is satisfied, then as long as the partition of schools remains unchanged, re-ranking the tiers does not affect the TDA equilibrium outcomes.\footnote{When within-tier acyclicity is not satisfied, we conjecture that re-ranking the tiers does not affect the TDA  equilibrium outcomes.} This implies that placing some schools in the top tier provides no advantage over placing them in a lower tier.\footnote{In the Turkish high school admission system, all schools within one tier rank students using the same exam score. Therefore, all TDA equilibrium outcomes are stable, and re-ranking the tiers does not affect the outcome. This result is consistent with \posscite{andersson-2024} Proposition 5. The same result holds for the Turkish public school teacher assignment system; see \cite{dur-2018} Corollary 4 for details.}

\begin{remark}\label{rmk:tiered other}
    One may wonder what the Nash equilibrium outcomes are under other sequential mechanisms that use the Boston or serial dictatorship (SD) mechanism within each tier. \cite{dur-2018} show that the set of Nash equilibrium outcomes for the tiered Boston or tiered SD mechanism equals the set of stable matchings.\footnote{Although \cite{dur-2018} assume only two tiers, their proof easily generalizes to multiple tiers.}$\hfill\square$
\end{remark}

\section{Goal Setting: Welfare Analysis of Schools} \label{sec:goal}
As shown in \Cref{exp:TDA}, the TDA mechanism does not always benefit top-tier schools in equilibrium. This section aims to address this problem. We will define the goal of the TDA mechanism as to guarantee certain schools to be weakly better off, and provide the unique class of tier structures that can achieve this goal.

To formally define the welfare of schools, we relate the school choice problem to the well-known two-sided matching markets \citep{gale-1962}. One key difference between the two is that in the latter, both students and schools are treated as agents with preferences, whereas, in the former, only students are treated as such. By treating school priorities as their preferences, one can find counterparts in the school choice problem for concepts and findings in two-sided matching.

Following this idea, we define schools' preferences over sets of students, while ensuring consistency with the priorities over individual students. Let $\underbar{\text{$\triangleright$}}$ denote the preference profile over subsets of students and $\triangleright$ denotes the strict relation. 
A preference $\underbar{\text{$\triangleright$}}_s$ is \textbf{responsive} if, for any subset of students $J$ and any two students $i, j \not \in J \subseteq I$, $i \succ_s j$ implies $(J \cup \{i\}) \triangleright_s (J \cup \{j\})$ \citep{roth-1985}.
In this section, we assume schools have responsive preferences.\footnote{We do not require schools' preference profile $\underbar{\text{$\triangleright$}}$ to be responsive for other results in this paper. Although we use responsiveness in the proofs of \Cref{thm:nested,thm:implementation}, it is for simplifying the notation. To be more specific, when we write $\mu^{-1}(s) \mathrel{\underbar{\text{$\triangleright$}}_s} \mu^{\prime -1}(s)$, we mean that school $s$ is getting a set of worse-off students under $\mu^{\prime}$ than under $\mu$ if $s$ has responsive preference. Here, we do not require $s$ to have a responsive preference for real; we only use this concept as a concise way to describe the relation between matchings $\mu$ and $\mu^\prime$.}

As stated in the Introduction, one goal of introducing the tier structure is to guarantee that certain predetermined schools (e.g., top-tier schools) are weakly better off than under the SOSM.\footnote{While we focus on the welfare effects of introducing a tier structure in this paper, we note that moving a school from a later tier to an earlier tier can also make the school worse off in equilibrium (see \Cref{exp:move} in \Cref{apx:B}).} Below, we formalize this ``guarantee'' property.

\begin{definition}\label{def:guarantee}
    A tier structure $t$ \textbf{guarantees school $s$ to be weakly better off} if, for any school choice problem $E$, $s$ weakly prefers any Nash equilibrium outcome under the TDA mechanism with $t$ to the SOSM.
\end{definition}

Let us justify this formalization. We want to model a market designer who needs to decide on the tier structure before knowing school quotas, priorities, and student preferences. And he considers the worst-case scenario when making this decision, caring about whether there exists an equilibrium outcome under which schools in $S^*$ are made worse off. This scenario aligns with the Chinese admission system. Given that the system faces millions of new students each year, most top-tier schools' priorities are determined by school-specific exam scores, and schools have the autonomy to adjust their quotas, it is reasonable to assume $(q, \succsim, R)$ changes every year.\footnote{For more information about quota adjustment, see Jiang Zhu et al., \textit{Ministry of education releases 24 new undergraduate majors, pilot enrollment to start with this year's college entrance exam}, Chinese Central Television News, March 30, 2024, \href{https://news.cnr.cn/native/gd/20240330/t20240330_526645484.shtml}{https://news.cnr.cn/native/gd/20240330/t20240330\_526645484.shtml}, accessed 08/14/2024.}

\begin{remark}
    In \Cref{def:guarantee}, we compare the Nash equilibria under the TDA mechanism with the undominated equilibria under the DA mechanism. For a fairer comparison, we will compare the undominated equilibria under both mechanisms in \Cref{sec:undominated}.$\hfill\square$
\end{remark}

In most cases, TDA's goal cannot be achieved. In fact, as we will show in \Cref{thm:unique}, only a unique partition of schools enables the TDA mechanism to achieve it.

\begin{theorem}
    \label{thm:unique}
    If $|I| \ge 3$, for any nonempty $S^* \subseteq S$, $t$ guarantees any school $s \in S^*$ to be weakly better off if and only if, $t$ has only one school per tier.
\end{theorem}

We denote the class of TDA mechanisms with the tier structure that has one school per tier as the finest TDA mechanism.\footnote{The finest TDA mechanism is equivalent to the student-proposing SD mechanism. In this mechanism, schools are ordered according to the tier structure, and each school, in turn, selects its most preferred students from those who have reported it as acceptable and remain unmatched.}

We provide a proof sketch of \Cref{thm:unique}. The ``if'' direction follows from \Cref{thm:implementation}.
The ``only if'' direction is proven by construction. Suppose there exists a tier that contains more than one school, then it could either contain a school in $S^*$ or not. In the former case, we can easily construct a cycle inside this tier to make this school worse off. For the latter case where all schools in $S^*$ are not involved in a within-tier cycle, we show that cycles in other tiers could still make some schools in $S^*$ worse off - as illustrated in \Cref{exp:TDA}, where tier 1 school $a$ is made worse off by the cycle between tier 2 schools $b$ and $c$. Finally, since a cycle needs at least three students to form, we impose the constraint of $|I| \ge 3$.

\Cref{thm:unique} has two implications. First, the provided solution for achieving TDA's goal - having only one school per tier - is impractical. 
Increasing the number of tiers leads to higher administrative costs and greater communication burdens for both schools and students. Additionally, this solution might even be infeasible, as market designers may not have full control of partitioning or integrating tiers. For example, in New York City, the separation of exam and regular schools \citep{abdulkadiroglu-2005}, and in Turkey, the separation between private and public schools \citep{andersson-2024}, demonstrate how external factors limit such decisions.

Second, the choice of $S^*$ does not affect the final mechanism. Even if we allow some schools to be worse off, we are still constrained to using the finest TDA mechanism. As shown in \Cref{thm:implementation}, the set of equilibrium outcomes under the finest TDA mechanism equals to the set of stable matchings. As a result, the finest TDA mechanism benefits ``some'' schools by essentially benefiting ``all'' schools.

\begin{remark}\label{rmk:finest}
    We show the relationship between the finest TDA mechanism and the Boston mechanism. 
    The similarity is that both mechanisms implement stable matchings in Nash equilibria. This is because when each tier contains only one school, within-tier acyclicity is satisfied by definition, making all TDA Nash equilibrium outcomes stable. \cite{ergin-2006} show that the same result holds for the Boston mechanism.

    However, the two mechanisms differ in terms of welfare properties. When students report truthfully, the Boston mechanism is Pareto efficient for students, while the finest TDA mechanism is not.\footnote{The mechanism is not necessarily Pareto efficient for schools. This is because each school can only choose from students who report it as acceptable. For example, suppose there is only one school with a quota of 1, and all students report this school as unacceptable. In this case, the matching under the finest TDA mechanism is that everyone is unmatched. But the matching where this school is matched with its favorite student would be the Pareto efficient outcome for schools. However, the outcome of the finest TDA mechanism indeed lies on the Pareto frontier (for schools) of matchings that are individually rational with respect to students' reported preferences.} The key difference is that the Boston mechanism allows students to prioritize schools freely, while the finest TDA mechanism determines the order of schools exogenously.$\hfill\square$
\end{remark}

In \Cref{def:guarantee}, we impose two ``for any'' requirements: for any school choice problem and for any Nash equilibrium outcome. Next, we consider dropping one of the two requirements. 

First, we consider weakening the requirement for Nash equilibrium outcomes to ``there exists a Nash equilibrium outcome such that $s$ weakly prefers it to the SOSM.'' Note that this modified goal holds trivially under any tier structure because the SOSM is a Nash equilibrium outcome by \Cref{thm:nested}. However, if we strengthen ``weakly prefer'' to ``strictly prefer,'' then no tier structure can satisfy this goal. This is because, when all schools have the same priorities, the set of Nash equilibrium outcomes is equal to the stable set, which contains only the SOSM \citep{romero-2013, akahoshi-2014}.\footnote{In fact, \cite{akahoshi-2014} shows that in many-to-one problems, the stable set is a singleton if and only if an acyclicity condition, which is stronger than \Cref{def:acyclicity}, is satisfied.}

Second, we consider weakening the requirement for the school choice problem $(q, \succsim, R)$ to ``for any $R$.'' That is, we allow the market designer to know the priority structure $(q, \succsim)$ before deciding on the tier structure. Under this weakened goal, one obvious solution is to set a within-tier acyclic tier structure. But there can be more solutions. In fact, depending on the set of schools $S^*$ that the market designer wants to prioritize, cycles within the tiers that do not contain any school in $S^*$ might be allowed (see \Cref{exp:S^*} in \Cref{apx:B}).

\begin{remark}
    Since we have argued that the only solution for the TDA mechanism to achieve its goal is impractical, a natural question arises: Are there any practical mechanisms that can achieve this goal? The answer is yes. In fact, one such solution is the Boston mechanism. This is because the set of Nash equilibrium outcomes of the Boston mechanism equals the set of stable matchings \citep{ergin-2006}. For the same reason, replacing the DA mechanism within each tier with the Boston or SD mechanism also achieves the goal, as discussed in \Cref{rmk:tiered other}. 
    Another type of solution is discussed in \Cref{rmk:FDA}.$\hfill\square$
\end{remark}

\section{Discussion} \label{sec:dis}
This section provides several discussions. 
\Cref{sec:similar} distinguishes the TDA mechanism from two similar mechanisms.
\Cref{sec:st-pf} characterizes the preference domain under which the TDA mechanism is strategy-proof. 
\Cref{sec:property} considers whether, when the tier structure is refined, we can compare the corresponding TDA mechanisms in terms of manipulability or stability.
\Cref{sec:undominated} studies undominated Nash equilibria under the TDA mechanism.
\Cref{sec:incomplete} examines incomplete information settings where students are uncertain about others' preferences or schools' priorities.

\subsection{Relevant Mechanisms} \label{sec:similar}
In this section, we distinguish two similar mechanisms from the TDA mechanism. Both mechanisms relate to the DA mechanism and feature a sequential process.

The first mechanism is the ``parallel'' mechanism studied by \cite{chen-2017}. Under this mechanism, students are given choice-bands in which they can list several ``parallel” schools. The first round of admission applies the DA mechanism considering only the first choice-band of students, and so on. If a student is assigned to a school at the end of one round, then his assignment is finalized. 

The ``parallel'' mechanism is equivalent to the DA mechanism when the choice-band allows students to apply to an unlimited number of schools. Before 2009, about 10 provinces in China, including Beijing, Sichuan, and Gansu, used the ``parallel'' mechanism rather than the DA mechanism for the second tier of college admissions.\footnote{Data source: \href{https://gaokao.chsi.com.cn/gkxx/zjsd/201003/20100311/65911460.html}{https://gaokao.chsi.com.cn/gkxx/zjsd/201003/20100311/65911460.html}, accessed 08/29/2024.} However, as of 2023, all provinces except Inner Mongolia have adopted the DA mechanism within tiers.\footnote{As noted in \cref{ft:China}, students' reported preference length is limited. However, since most provinces allow students to list up to 40 schools and 96 majors, we model the Chinese admission mechanism within each tier as the DA mechanism. For more information on the limitation of preference reporting in China, see Peng Chen, \textit{The New ``Gaokao'' System Implemented in 8 Provinces: How to Fill Out Your College Preferences}, Guangming Daily, June 11, 2021, \href{http://www.moe.gov.cn/jyb_xwfb/xw_zt/moe_357/2021/2021_zt12/meiti/202106/t20210611_537429.html}{http://www.moe.gov.cn/jyb\_xwfb/xw\_zt/moe\_357/2021/2021\_zt12/meiti/202106/t20210611\_537429.html}, accessed 08/29/2024.}

The difference lies in the setup of choice-bands and tiers: when each choice-band allows for only one school, the ``parallel'' mechanism is equivalent to the Boston mechanism. However, if each tier contains only one school, the TDA mechanism does not match the Boston mechanism, as explained further in \Cref{rmk:finest}. This distinction arises because the ``parallel'' mechanism allows students to choose which schools to include in their choice-band, whereas the TDA mechanism assigns schools to tiers exogenously. In fact, none of the property-specific comparisons in \cite{chen-2017} extends to the TDA mechanism, as we will observe in \Cref{sec:property}.

The second mechanism is the iterative DA (IDA) mechanism, first introduced by \cite{bo-2022}. In this system, students iteratively submit a limited length of preferences over lists of schools. 
When there is only one iteration and students can submit their complete preference lists, the IDA mechanism is equivalent to the DA mechanism. 

We show two differences here. First, under the IDA mechanism, lists are determined endogenously based on student preferences, rather than exogenously. Second, the IDA mechanism involves a dynamic adjustment process: after each iteration, students resubmit their preferences, and their assignments may change. In contrast, the TDA mechanism is static in the sense that students report their preferences once, and their assignments are finalized after each tier unless they remain unmatched.

\subsection{Maximal Domain for Strategy-Proofness} \label{sec:st-pf}
In this section, we characterize the maximal preference domain for the TDA mechanism to be strategy-proof. Then, we use this result to show why the intuition that introducing a tier structure benefits top-tier schools fails.

First, we define the alignment between a preference and a tier structure. A preference aligns with a tier structure if all schools in earlier tiers are preferred to those in later tiers.\footnote{Another way of stating this condition is that the preference profile is tiered by the tier structure (see \cref{ft:tiered-pref} for the definition of a tiered preference profile). A tiered preference domain has been studied in the two-sided matching literature and has been identified as a solution to finding a stable and strategy-proof mechanism \citep{kesten-2010, akahoshi-2014b, kesten-2019, hatakeyama-2024}.}

\begin{definition}
    A preference $R_i$ is \textbf{aligned} with tier structure $t$ if for any schools $s, s^\prime \in S$, $s \mathrel{R_i} s^\prime \Rightarrow t_s \le t_{s^\prime}$.
\end{definition}

Fixing a tier structure $t$, let $\mathcal{R}^t := \{R_i \in \mathcal{R}: R_i \text{ is aligned with } t\}$ be the preference domain under which preferences are aligned with $t$.

\begin{proposition}
    \label{prop:st-pf}
    For any tier structure $t$, the TDA mechanism with $t$ is strategy-proof if and only if for any school choice problem $E = (q, \succsim, R)$, each preference $R_i$ is aligned with $t$, i.e., $R_i \in \mathcal{R}^t$ for any $i \in I$.
\end{proposition}

We sketch the proof of \Cref{prop:st-pf}. 
The ``if'' direction directly follows from \Cref{rmk:aligned}. For the ``only if'' direction, consider the following intuition: Suppose a student prefers a later tier school to an earlier tier school and has very high priority at the later tier school. In this case, the student may want to misreport the earlier tier school as unacceptable to ensure admission to his preferred school.

Let us consider an implication of \Cref{prop:st-pf}.
One intuition for placing some schools in the top tier is that it reduces competition, potentially benefiting those schools. To formalize this idea, suppose students always report truthfully, then the top-tier schools will be weakly better off.\footnote{See \Cref{lem:lessstudent} in \Cref{apx:A} for the formal proof. However, the welfare analysis for the second or third-tier schools is more ambiguous - they may not be weakly better than under the SOSM. Additionally, it is not true that if the third-tier schools are no worse off, so will the second-tier schools. Consider an example with three schools, $a, b, c$, and three students, $1, 2, 3$. Schools have the same priority $1 \succ 2 \succ 3$ and the same quota of $1$. Students have the same preference $b \mathrel{P_i} a \mathrel{P_i} c \mathrel{P_i} i$. The SOSM is $((1, b), (2, a), (3, c))$. But if the tier structure is $(1, 2, 3)$ and all students report truthfully, the TDA outcome is $((1, a), (2, b), (3, c))$. In this example, the second-tier school $b$ is strictly worse off, while the third-tier school $c$ obtains the same outcome.}
Since we assume students are fully sophisticated, one reasonable case in which they always report truthfully is when the TDA mechanism is strategy-proof. Then, as shown in \Cref{prop:st-pf}, we can only choose from tier structures that align with all possible true preferences of students. However, even if such a restrictive condition is met, the TDA mechanism will be outcome-equivalent to the DA mechanism. In other words, introducing the tier structure does not bring any new advantages to the mechanism.
Therefore, \Cref{prop:st-pf} suggests that the intuition for introducing a tier structure breaks down when considering the strategizing behavior of students. This necessitates conducting the Nash equilibrium analysis as discussed in \Cref{sec:nested,sec:implementation}.

\subsection{Property-Specific Comparison} \label{sec:property}
In this section, we study properties of the TDA mechanism, such as manipulability and stability. One natural question is: when the tier structure is refined, do monotonic relationships exist between the corresponding TDA mechanisms in terms of these properties? The answer is no.

Following the framework of \cite{pathak-2013} and \cite{chen-2017}, we define the following properties for comparison across mechanisms.

A mechanism $f$ is \textbf{manipulable} at a school choice problem $E$ if, there exists a student $i \in I$ such that there exists preference $Q_i \in \mathcal{R}$ with $f(Q_i, R_{-i})(i) \mathrel{R_i} f(R)(i)$, i.e., $R$ is not a Nash equilibrium. A mechanism $f$ is \textbf{more manipulable} than mechanism $f^\prime$ if (1) at any problem $f^\prime$ is manipulable, then $f$ is also manipulable; and (2) the converse is not always true, i.e., there is at least one problem at which $f$ is manipulable but $f^\prime$ is not.

A mechanism $f$ is \textbf{stable} at a school choice problem $E$ if $f(E)$ is stable. A mechanism $f$ is \textbf{more stable} than mechanism $f^\prime$ if (1) at any problem $f^\prime$ is stable, $f$ is also stable; and (2) the converse is not always true, i.e., there is at least one problem at which $f$ is stable but $f^\prime$ is not. 

Given that the DA mechanism is both strategy-proof and stable, we know that the TDA mechanism under a coarser tier structure is not more manipulable nor less stable than under a finer tier structure. In \Cref{exp:mani,exp:stable}, we show that the converse side is not true either. That is, the TDA mechanism under a finer tier structure is not necessarily more manipulable nor less stable than under a coarser tier structure.

\begin{example}[The TDA mechanism is not more manipulable as the tier structure becomes finer] \label{exp:mani}
    Consider three students, $1, 2, 3$, and three schools, $a, b, c$, each with one seat. Student preferences $R$ as well as school priorities $\succsim$ are as follows:
    \begin{multicols}{2}
    \begin{center}
    \begin{tabular}{c|c|c}
        \textbf{$R_1$} & \textbf{$R_2$} & \textbf{$R_3$} \\
        \hline
        $c$ & $b$ & $b$ \\
        $b$ & $c$ & $a$ \\
        $a$ & $a$ & $c$ \\
        $1$ & $2$ & $3$ \\
    \end{tabular}
    \end{center}
   
    \columnbreak
    \begin{center}
    \begin{tabular}{c|c|c}
        \textbf{$\succsim_a$} & \textbf{$\succsim_b$} & \textbf{$\succsim_c$} \\
        \hline
        $3$ & $2$ & $1$ \\
        $2$ & $1$ & $2$ \\
        $1$ & $3$ & $3$ \\
    \end{tabular}
    \end{center}
    \end{multicols}

    Under the tier structure $t = (1, 2, 3)$, $R$ is a Nash equilibrium of the TDA mechanism. However, under the coarser tier structure $t^\prime = (1, 2, 2)$, $R$ is not a Nash equilibrium.\footnote{This is because under $R$, student 3 can deviate to $Q_3$: $b \mathrel{Q_3} c \mathrel{Q_3} 3 \mathrel{Q_3} a$, and be matched to school $b$.} Thus, $TDA(t)$ is not more manipulable than $TDA(t^\prime)$.$\hfill\square$
\end{example}

One intuition for the failure of this result is that the nested structure is for reshuffled Nash equilibria, not necessarily for Nash equilibria themselves. That is, if $R$ is a Nash equilibrium under the finer tier structure, then $\tilde R$ must be an equilibrium under a coarser tier structure, but $R$ itself might not be.

Next, we focus on stability. Although the TDA mechanism is not stable, it satisfies a weaker version of stability.\footnote{Weaker versions of stability have been studied by several works on matching under constraints \citep{abdulkadiroglu-2003, hafalir-2013, kamada-2015, kamada-2017}. But their analysis is independent of ours since their models do not feature a tier structure.}

\begin{definition}
    A matching $\mu$ is \textbf{stable with respect to the tier structure $t$} if 
    \begin{enumerate}
        \item $\mu$ is individually rational; and
        \item if $\mu$ has a blocking pair $(i, s)$, then $t_{\mu(i)} < t_s$.
    \end{enumerate}
\end{definition}

That is, we allow a blocking pair to exist if student $i$ prefers a school $s$ in a later tier than his current match $\mu(i)$. This concept reduces to the standard stability concept of \cite{gale-1962} if all schools are in the same tier.

A mechanism $f$ is stable with respect to the tier structure $t$ if, for any school choice problem $E$, $f(E)$ is stable with respect to the tier structure $t$.

\begin{proposition} \label{prop:stable-tier}
    The TDA mechanism is stable with respect to the tier structure $t$.\footnote{Note that this weaker stability condition allows the TDA mechanism to be wasteful. The problem of finding a non-wasteful sequential mechanism has been illustrated in \cite{dur-2018, andersson-2024, hatakeyama-2024}. See \Cref{sec:lit} for more details.}
\end{proposition}

Thus, one may intuitively think that the TDA mechanism becomes more stable as the tier structure becomes coarser. But \Cref{exp:stable} provides a counterexample.

\begin{example}[The TDA mechanism is not more stable as the tier structure becomes coarser] \label{exp:stable}
    Consider three students, $1, 2, 3$, and three schools, $a, b, c$, each with one seat. Student preferences $R$ as well as school priorities $\succsim$ are as follows:
    \begin{multicols}{2}
    \begin{center}
    \begin{tabular}{c|c|c}
        \textbf{$R_1$} & \textbf{$R_2$} & \textbf{$R_3$} \\
        \hline
        $c$ & $c$ & $b$ \\
        $a$ & $b$ & $c$ \\
        $b$ & $2$ & $3$ \\
        $1$ & $a$ & $a$ \\
    \end{tabular}
    \end{center}
   
    \columnbreak
    \begin{center}
    \begin{tabular}{c|c|c}
        \textbf{$\succsim_a$} & \textbf{$\succsim_b$} & \textbf{$\succsim_c$} \\
        \hline
        $1$ & $2$ & $3$ \\
        $2$ & $3$ & $1$ \\
        $3$ & $1$ & $2$ \\
    \end{tabular}
    \end{center}
    \end{multicols}

    Under the tier structure $t = (1, 2, 3)$, $TDA(t)(R) = ((1, a), (2, b), (3, c))$ is stable. However, under the coarser tier structure $t^\prime = (1, 2, 2)$, $TDA(t^\prime)(R) = ((1, a), (2, c), (3, b))$ is not stable - $(1, c)$ constitutes a blocking pair. Thus, $TDA(t^\prime)$ is not more stable than $TDA(t)$.$\hfill\square$
\end{example}

Let us explain where the intuition goes wrong. As suggested by \Cref{prop:stable-tier}, a blocking pair $(i,s)$ arises when student $i$ is forced to match with a school in an earlier tier, allowing potentially lower-ranked students to be matched with $s$. Actually, for the same reason, if the tier structure is refined, these lower-ranked students who were originally matched to $s$ may be forced to match with another school. In \Cref{exp:stable}, for instance, student 2 is rematched to school $b$ under $t$, causing the justified envy to disappear. Therefore, stability might be recovered after refining the tier structure.

\subsection{Undominated Nash Equilibrium} \label{sec:undominated}
In this section, we refine the solution concept to undominated Nash equilibrium. We show that the set of undominated Nash equilibrium outcomes under the TDA mechanism may contain more matchings than the one under the DA mechanism. This result provides a sanity check for the comparison between the DA and TDA mechanisms.

Strategy A is \textbf{weakly dominated} by strategy B if there is at least one set of opponents' actions for which A gives a worse outcome than B, while all other sets of opponents' actions give A the same payoff as B.
A Nash equilibrium is \textbf{undominated} if none of the students uses a weakly dominated strategy in this equilibrium. 
An undominated Nash equilibrium outcome is the outcome of an undominated Nash equilibrium.

First, we identify some weakly dominated strategies under the TDA mechanism to simplify the analysis follows. If a reported preference is inconsistent with the true preference within a tier, then it is either weakly dominated by a preference that is consistent within this tier or yields the same outcome as this consistent preference under all opponents' actions.

\begin{proposition} \label{prop:weakly}
    Given any school choice problem $E$ and tier structure $t$. 
    Fix any preference $Q_i$ such that there exist schools $s$ and $s^\prime$, with $t_s = t_{s^\prime}$, $s \mathrel{P_i} s^\prime$, and $s^\prime \mathrel{Q_i} s$. 
    Let $\hat S_{t_s} := \{s \in S_{t_s}: s \mathrel{Q_i} i\}$ be the set of tier $t_s$ schools that $Q_i$ ranks as acceptable. 
    Let preference $Q^*_i$ be such that $s_1 \mathrel{P_i} s_2 \iff s_1 \mathrel{Q^*_i} s_2$ for any $s_1, s_2 \in \hat S_{t_s}$, and $s_1 \mathrel{Q_i} s_2 \iff s_1 \mathrel{Q^*_i} s_2$ for any $s_1, s_2 \in (S \backslash \hat S_{t_s}) \cup \{i\}$
    
    Then $Q_i$ is either weakly dominated by $Q^*_i$, or $TDA(t)(Q_i, Q_{-i})(i) = TDA(t)(Q^*_i, Q_{-i})(i)$ for any $Q_{-i} \in \mathcal{R}^{|I|-1}$.
\end{proposition}

The proof for \Cref{prop:weakly} follows from the strategy-proofness of the DA mechanism. That is, for any opponents' actions $Q_{-i}$, $TDA(t)(Q^*_i, Q_{-i})(i) \mathrel{R_i} TDA(t)(Q_i, Q_{-i})(i)$.

One implication of \Cref{prop:weakly} is that if an inconsistent preference $Q_i$ is weakly undominated, then the corresponding consistent preference $Q^*_i$ is also weakly undominated, and both strategies yield the same outcome. Thus, when determining the set of undominated Nash equilibrium outcomes, we only need to consider preferences that are consistent with the true preferences within tiers.

Then, we state two features regarding the undominated Nash equilibrium outcomes. 
First, the undominated Nash equilibrium outcome of the TDA mechanism can be unstable.
Consider \Cref{exp:TDA}, the Nash equilibrium $Q$ is in fact undominated. 
For students 1 and 2, they only need to consider whether to list school $a$ as acceptable. If student 1 lists $a$ as acceptable, he will be strictly worse off given $Q_{-1}$. The same applies to student 2 if he does not list $a$ as acceptable. For student 3, listing $a$ as acceptable or listing $c$ as unacceptable both lead to a strictly worse outcome under either $Q_{-3}$ or the case where students 1 and 2 report only school $b$ as acceptable.

Second, some Nash equilibrium outcomes of the TDA mechanism are dominated.
Consider \Cref{exp:TDA} again, but change the tier structure to $t^\prime = (2, 1, 1)$. In this case, the matching $\mu = ((1, c), (2, a), (3, b))$ remains a TDA Nash equilibrium outcome under $t^\prime$. However, $\mu$ is dominated. This is because now the unique undominated strategy for student 2 is $R_2$, and the undominated strategies for student 1 are $Q_1$ and $R_1$. Then, no matter what student 3 reports, the TDA outcome cannot be $\mu$.

Let us consider an implication of the above analysis. 
For the DA mechanism, the unique undominated Nash equilibrium outcome is the SOSM \citep{kumano-2011, kumano-2012}. In contrast, the TDA mechanism's undominated Nash equilibrium outcomes are not necessarily stable and may result in strictly worse outcomes for top-tier schools, as shown in \Cref{exp:TDA}. Therefore, when we impose a sanity check by assuming that students do not play weakly dominated strategies and compare the undominated TDA equilibrium outcomes with the SOSM, the findings support our earlier conclusion that a transition from the DA to the TDA mechanism may not achieve the intended goal.

\subsection{Incomplete Information} \label{sec:incomplete}
In this section, we study the preference revelation game under the TDA mechanism when students have incomplete information about others' preferences or schools' priorities. We give two messages in this section: the Nash equilibrium analysis results (\Cref{thm:nested,thm:implementation}) do not extend, and the TDA mechanism still can fail the goal of guaranteeing that top-tier schools are better off. 

Here, we only cover the case where students are uncertain about school priorities; \Cref{exp:prefun} in \Cref{apx:B} discusses the scenario of preference uncertainty.

\begin{example}[Priority uncertainty] \label{exp:prioun}
    Consider three students, $1, 2, 3$, and three schools, $a, b, c$, each with one seat. School $a$ is in tier 1, and schools $b$ and $c$ are in tier 2. All students are expected utility maximizers. They share identical types (i.e., utility functions), $U_1, U_2, U_3$, and thus the same preferences, $R_1, R_2, R_3$:
    \begin{multicols}{2}
        \begin{center}
        \begin{tabular}{c|c|c|c|}
             \multicolumn{1}{c}{} & \multicolumn{1}{c}{$U_1$}  & \multicolumn{1}{c}{$U_2$} & \multicolumn{1}{c}{$U_3$} \\\cline{2-4}
              $a$ & $2$ & $2$ & $2$  \\\cline{2-4}
              $b$ & $3$ & $3$ & $3$ \\\cline{2-4}
              $c$ & $1$ & $1$ & $1$ \\\cline{2-4}
              $i$ & $0$ & $0$ & $0$ \\\cline{2-4}
        \end{tabular}
        \end{center}

        \columnbreak
        \begin{center}
        \begin{tabular}{c|c|c}
             \textbf{$R_1$} & \textbf{$R_2$} & \textbf{$R_3$}\\
            \hline
            $b$ & $b$ & $b$\\
            $a$ & $a$ & $a$\\
            $c$ & $c$ & $c$\\
            $1$ & $2$ & $3$\\
        \end{tabular}
        \end{center}
    \end{multicols}

    Students face two potential priority profiles, $\succsim$ and $\succsim^\prime$, with probabilities $1/5$ and $4/5$, respectively.
    \begin{multicols}{2}
        \begin{center}
        \begin{tabular}{c|c|c}
            \textbf{$\succsim_a$} & \textbf{$\succsim_b$} & \textbf{$\succsim_c$}\\
            \hline
            $1$ & $1$ & $1$\\
            $2$ & $2$ & $2$\\
            $3$ & $3$ & $3$\\
        \end{tabular}
        \end{center}

        \columnbreak
        \begin{center}
        \begin{tabular}{c|c|c}
            \textbf{$\succsim_a^\prime$} & \textbf{$\succsim_b^\prime$} & \textbf{$\succsim_c^\prime$}\\
            \hline
            $2$ & $2$ & $2$\\
            $3$ & $3$ & $3$\\
            $1$ & $1$ & $1$\\
        \end{tabular}
        \end{center}
    \end{multicols}

    Note that in this preference revelation game, each student's strategy space is still $\mathcal{R}$. That is, students' strategies cannot depend on the realization of the priority profile. We make this restriction to model the reality where each student can only report one preference ranking.
   
    In the dominant-strategy equilibrium of the DA mechanism, everyone reports truthfully, and the respective outcomes under two realizations are:
    $$DA(q, \succsim, R) = \begin{pmatrix}
    1 & 2 & 3 \\
    b & a & c \\
    \end{pmatrix},
    \text{ }
    DA(q, \succsim^\prime, R) =
    \begin{pmatrix}
    1 & 2 & 3 \\
    c & b & a \\
    \end{pmatrix}.$$

    Considering the preference revelation game induced by the TDA mechanism, one Nash equilibrium is shown below:
    \begin{center}
        \begin{tabular}{c|c|c}
            \textbf{$Q_1$} & \textbf{$Q_2$} & \textbf{$Q_3$} \\
            \hline
            $b$ & $b$ & $b$\\
            $c$ & $c$ & $a$\\
            $1$ & $2$ & $c$\\
            $a$ & $a$ & $3$\\
        \end{tabular}
    \end{center}
    The corresponding TDA outcomes under two realizations are:
    $$
    TDA(t)(q, \succsim, Q) = \begin{pmatrix}
    1 & 2 & 3 \\
    b & c & a \\
    \end{pmatrix},
    \text{ }
    TDA(t)(q, \succsim^\prime, Q) =
    \begin{pmatrix}
    1 & 2 & 3 \\
    c & b & a \\
    \end{pmatrix}.
    $$

    In fact, the two matchings we derived under the DA and TDA mechanisms are the unique Nash equilibrium outcomes, respectively.

    First, we show that \Cref{thm:nested,thm:implementation} do not extend to this setting. First, note that the unique TDA equilibrium outcome $TDA(t)(q, \succsim, Q)$ is not stable, as $(2, b)$ forms a blocking pair - a failure of part 1 of \Cref{thm:nested}. Second, the priority profiles in both realizations satisfy within-tier acyclicity - a failure of \Cref{thm:implementation}. Third, the Nash equilibrium outcome of the TDA mechanism is not an equilibrium outcome under the DA mechanism - a failure of part 2 of \Cref{thm:nested}.

    Second, we show that the TDA mechanism may not improve the quality of students admitted by top-tier schools as intended.
    Notice that under one realization, $DA(q, \succsim^\prime, R)$ and $TDA(t)(q, \succsim^\prime, Q)$ are the same. Under another, however, the top-tier school $a$ ends up with the lower-priority student $3$ instead of student $2$, who was initially achievable under the DA mechanism.$\hfill\square$
\end{example}

This scenario closely reflects the reality of the current Chinese admission system. After the recent 2020 reform, students must report their preferences for some top-tier schools before taking the national examination, introducing uncertainty about their exam scores and thus, their priorities.\footnote{Currently, top-tier admissions are conducted on majors instead of schools. For consistency, we stick to the term ``schools'' when referring to the majors in this additional tier.}\footnote{As most top-tier schools assign a heavy weight, i.e., 85\% or more, to the national examination when ranking students, there tends to be an alignment in priority profile as shown in \Cref{exp:prioun}. We include \Cref{exp:prioun2} in \Cref{apx:B} where school priorities do not align.}

Media often justify this early reporting procedure by suggesting that top-tier schools can attract students who strongly prefer them, as these students are willing to risk early admission. Another rationale is that these schools might get high-scoring students who use the top-tier program as a ``safety net.''\footnote{Feng Wang, \textit{2023 Programs Explore Major Reforms: Multiple Schools Conduct Early Written Exam to Select Students with Specialized Talents}, $21^{\text{st}}$ Century Business Herald, April 27, 2023, \href{https://www.21jingji.com/article/20230427/herald/2d3350fbb5f0a66df984ee0da2f79641.html}{https://www.21jingji.com/article/20230427/herald/2d3350fbb5f0a66df984ee0da2f79641.html}, accessed 08/14/2024.}

However, \Cref{exp:prioun} refutes these points. Given any realization, since the priority is the same for all schools, the stable matching is unique. Thus, both the TDA and DA mechanisms would result in the SOSM under equilibrium. Yet, introducing uncertainty alters the outcome. Under realization $\succsim$, school $a$ ends up with a lower-priority (lower-scoring) student $3$ instead of the originally achievable student $2$, despite both students having the same preference for school $a$, and neither considers $a$ their top choice. This discrepancy arises because student $2$, expecting to score well, opts not to apply to the “safety net” school $a$, which also leads to a downgrade to a less preferred school $c$ when he underperforms.

\section{Related Literature} \label{sec:lit}
This paper is closely related to studies in sequential mechanisms. In such mechanisms, objects are allocated to agents in multiple stages (tiers), and agents who are assigned in earlier stages are excluded from participating in subsequent ones.

\cite{abdulkadiroglu-2005, abdulkadiroglu-2009} study the high school admission system in New York City. They express the concern that a two-stage system can create unstable outcomes. 
\cite{westkamp-2012} studies the German college admissions system, which uses the Boston mechanism in the first stage and the school-proposing DA mechanism in the second. He characterizes the set of Nash equilibrium outcomes of the mechanism as the stable set.

\cite{dur-2018} study a general two-stage sequential mechanism that might consist of different mechanisms in each stage. Two applications of their analysis are the Turkish state school teacher appointment system and the US public school admission system.\footnote{In Turkey, teachers are first assigned to tenured positions and then to contractual positions in state schools through a two-stage SD mechanism. See \cite{dur-2018} for more details of this mechanism.}\footnote{In New York City, the first tier contains exam schools, while the second tier includes regular high schools. Exam schools rank students using a uniform test, while regular schools rank students based on demographic criteria. Unlike the TDA mechanism, students can initially apply to both tier 1 and tier 2 schools simultaneously and choose between them if admitted by schools in both tiers. After this stage, unmatched students apply again to tier 2 schools to determine the final outcome. This variation is not modeled in this paper. See \cite{abdulkadiroglu-2005, abdulkadiroglu-2009} for an earlier account of the New York City mechanism. We thank Al Roth for pointing out the distinction between the New York City admission system and the TDA mechanism.}
They show that no two-stage mechanism comprising the DA, Boston, and top trading cycle (TTC) mechanisms can be straightforward, meaning that some students can gain by misreporting in some stages they participate in.
They also conduct Nash equilibrium analysis and show that all stable matchings are TDA Nash equilibrium outcomes. Our part 1 of \Cref{thm:nested} generalizes this result to the case with multiple tiers.

Motivated by the school admission systems in Turkey and Sweden, \cite{andersson-2024} model the sequential assignment process as an extensive form game, by assuming students can observe the moves in former tiers before reporting their preferences for the next tier.\footnote{In the Greater Stockholm Region, Sweden, students are first assigned to private primary schools, and then unmatched students are assigned to public primary schools. The priority in private schools is given on a first-come-first-served basis, while the priority for public schools is decided based on relative distance to schools. See \cite{andersson-2017} and \cite{andersson-2024} for more details of this mechanism.}
Accordingly, they use the solution concept of subgame perfect Nash equilibrium (SPNE).\footnote{In China, students submit their preferences twice: first for schools in tiers 1 and 2, and again for tier 3 schools if they are still unmatched. Some provinces announce which schools have available seats at the end of each round, allowing unmatched students to resubmit their preferences for the schools in that round. This extra stage is not modeled in this paper. 
In Turkey, students initially apply to private schools in a decentralized manner. Those who remain unmatched then submit their preferences for public schools.
For simplicity, we model these systems as simultaneous games rather than extensive form games, using the solution concept of Nash equilibrium rather than SPNE. Moreover, \posscite{andersson-2024} assumption that students' moves are observable is not satisfied in the Chinese setting.}
For any two-stage TDA mechanism, they show that there exists a problem under which the unique SPNE is not straightforward. They also show that if the priority structure is acyclic, a two-stage TDA mechanism implements stable matchings in equilibrium. Under the solution concept of Nash equilibrium, our \Cref{thm:implementation} generalizes their implementability result to allow for cycles across tiers and additional stages.

\cite{hatakeyama-2024} studies the problem of designing a tier structure under a multi-stage mechanism to obtain straightforwardness and non-wastefulness. One application of his analysis is the employment exams for public officers.\footnote{In Japan, exams for the most prestigious national employee positions are held first, followed by those for positions at the prefecture and city levels. For more details, see \href{https://90r.jp/schedule.html}{https://90r.jp/schedule.html}, accessed 08/15/2024.}
He shows that when the preference profile is tiered, for any multi-stage mechanism comprising the DA, TTC, and SD mechanisms, there exists a tier structure where the mechanism is straightforward and non-wasteful.\footnote{\label{ft:tiered-pref}A preference profile is tiered if there exists a tier structure such that the preference profile aligns with the tier structure.} Our \Cref{prop:st-pf} can be seen as an application of this result, by specifying the tier structure under the TDA mechanism.

To summarize, our work provides three new results regarding the TDA mechanism. First, we allow for the case of multiple tiers. This allows us to consider the impact of refining the tier structure and reveals the nest structure of the TDA mechanism (\Cref{thm:nested}). Second, we weaken the acyclicity condition to within-tier acyclicity, suggesting that cycles across tiers do not generate unstable outcomes in equilibrium (\Cref{thm:implementation}). Third, we study the welfare of schools under the TDA mechanism (\Cref{thm:unique}), while other studies focus on properties such as straightforwardness and non-wastefulness.

Complementary to the analysis of sequential mechanisms, several works investigate parallel mechanisms. In such mechanisms, students have the option to apply to subsequent stages of schools even after getting assigned to earlier ones. However, if they accept an offer from a later stage, they must give up their previously allocated seat.

Among those studies, several papers focus on the parallel use of the DA mechanism. 
\cite{manjunath-2016} show that the outcomes under such a parallel mechanism can be inefficient, and propose an iterative mechanism to rematch students and improve welfare. \cite{turhan-2019} and \cite{afacan-2022} further study the properties of this proposed mechanism. \cite{haeringer-2021} analyze a multi-stage college admission mechanism in France, introducing refitting rules to improve student welfare. \cite{dogan-2019, dogan-2023} analyze a two-stage school admission mechanism in Chicago, examining the welfare effects of adding one more tier. Instead of focusing on the welfare of students, \cite{ekmekci-2019} focus on the incentives for schools to join a centralized admission system versus staying in a parallel one, finding centralized admissions not incentive-compatible for all schools to join in general.

Our work differs from the analysis of parallel mechanisms in two aspects. 
First, studies in parallel mechanisms worry about wastefulness, since students dropping from matched schools in earlier tiers generate empty seats. However, as illustrated in \Cref{sec:implementation}, all TDA equilibrium outcomes are non-wasteful.
Second, the two mechanisms bear different welfare properties, both for schools (\Cref{rmk:school-incentive}) and for students (\Cref{rmk:student-welfare}).

\begin{remark} \label{rmk:school-incentive}
    Let us consider a scenario where schools can also strategize. Specifically, we consider that when all other schools are in one tier, whether a school has an incentive to join. This scenario captures a decentralized admission process where schools can unilaterally evade the central admission system. 
    \cite{ekmekci-2019} show that in a two-stage parallel DA mechanism, every school weakly prefers to be in tier 2 if all other schools are in tier 1.
    This result does not apply to our setting: under the TDA mechanism, schools may not have an incentive to unilaterally evade. In \Cref{exp:TDA}, it can be checked that under both tier structures $t = (1, 2, 2)$ and $t^\prime = (2, 1, 1)$, the set of TDA equilibrium outcomes remains the same, and school $a$ ends up with a worse equilibrium outcome than under the SOSM.$\hfill\square$
\end{remark}

\begin{remark} \label{rmk:student-welfare}
    The TDA mechanism has ambiguous welfare effects on students. The TDA mechanism is not Pareto efficient, and not all TDA equilibrium outcomes Pareto dominate the SOSM - as shown in \Cref{exp:TDA}. \cite{dogan-2023} show that, under a two-stage parallel DA mechanism, when students can only play truncation strategies, an additional stage improves the welfare of all students. This result does not apply to our setting: in \Cref{exp:TDA}, everyone is playing a truncation strategy in $Q$, but student 2 is worse off than under the SOSM.$\hfill\square$
\end{remark}

\begin{remark}\label{rmk:FDA}
    Our paper also relates to the study of distributional constraints, where schools are partitioned into different regions and restrictions are placed on the total number of students assigned to each region (``regional caps'').
    In practice, distributional constraints have been used to achieve goals similar to those discussed in \Cref{sec:goal}. For example, in China, there are two types of master's degrees: academic and professional. In 2010, to increase the number of students pursuing professional master's degrees, the Chinese government imposed a cap on the number of academic master's students \citep{kamada-2015}.
    
    \cite{kamada-2015} study medical residency matching in Japan, which features such regional caps, and propose the flexible deferred acceptance (FDA) mechanism. This mechanism is strategy-proof, so we focus only on its outcome under truthful reporting.
    They show that when a regional cap is not binding, schools within that region are weakly better off under the FDA mechanism compared to the DA mechanism (without regional caps).
    Thus, by treating each tier as a region and setting the regional cap for top-tier schools above their combined quotas, we can guarantee that top-tier schools are weakly better off.$\hfill\square$
\end{remark}

\section{Conclusion} \label{sec:conclude}
In this paper, we analyze the Nash equilibria under the TDA mechanism, showing the nested structure of equilibrium outcomes and identifying a necessary and sufficient condition for implementing stable matchings. Our results show that simply dividing schools into tiers does not necessarily enhance the quality of students matched to top-tier schools.

One future direction involves analyzing scenarios where students are partially sophisticated. While comparative statics analysis by \cite{roth-1990} suggests that schools in the top tier benefit when all students are fully naive, i.e., reporting truthfully, our research presents a contrasting result where all students are fully sophisticated. Thus, a question arises: how much rationality is needed for the TDA mechanism to fail to achieve its objective? One possible approach is to adopt the framework from \cite{pathak-2008}, assuming a portion of students as fully naive and another as fully sophisticated, and analyze whether there exists a cutoff in the proportion that affects the mechanism's outcome.

Another possible direction is the equilibrium analysis in large markets. As the market size increases, profitable deviation becomes more difficult \citep{azevedo-2016}. Therefore, whether the TDA mechanism is strategy-proof in the large \citep{azevedo-2018}, and whether the Nash equilibrium outcomes can enable the mechanism to achieve its goal remains to be explored.

\newpage
\appendix
\begin{center}
    \textbf{\LARGE Appendix}
\end{center}
\section{Proofs} \label{apx:A}
We begin by citing two useful lemmas to support the proofs below.

\begin{lemma}[\citealt{roth-1990}, Lemma 4.8]\label{lem:topchoice}
    Given any school choice problem $E$, let $Q_i$ be a preference list where the first choice is $DA(R)(i)$ if $DA(R)(i) \neq i$, and the empty list otherwise. Then, $DA(Q_i, R_{-i})(i) = DA(R)(i)$.
\end{lemma}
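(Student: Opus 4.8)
The plan is to rely entirely on the two defining features of the student-proposing DA mechanism: the matching it returns is stable with respect to the reported preferences, and it is the student-optimal such matching (weakly preferred by every student to every other stable matching, by \cite{gale-1962}). Write $\mu := DA(Q)$ and let $\widehat Q := (Q_i^\prime, Q_{-i})$. The whole argument reduces to the observation that truncating $i$'s list so that $\mu(i)$ becomes his only acceptable school leaves $\mu$ stable, after which student-optimality forces $i$'s assignment to be unchanged.

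The first step is to check that $\mu$ is stable with respect to $\widehat Q$. Individual rationality is immediate: every $j \neq i$ reports the same list, so $\mu(j)$ is acceptable under $Q_j^\prime = Q_j$ since $\mu$ is individually rational under $Q$; and for $i$, either $\mu(i) \neq i$, in which case $\mu(i)$ is literally the top entry of $Q_i^\prime$, or $\mu(i) = i$, in which case $Q_i^\prime$ is the empty list and $i$ is trivially assigned acceptably. For blocking pairs, suppose some $(j,s)$ blocks $\mu$ under $\widehat Q$, i.e. $s \mathrel{Q_j^\prime} \mu(j)$ together with $|\mu^{-1}(s)| < q_s$ or $j \succ_s j^\prime$ for some $j^\prime \in \mu^{-1}(s)$. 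If $j \neq i$ then $Q_j^\prime = Q_j$, so the same pair blocks $\mu$ under $Q$, contradicting stability of $DA(Q)$. If $j = i$, then $s \mathrel{Q_i^\prime} \mu(i)$ is impossible: when $\mu(i) \neq i$, $\mu(i)$ is $Q_i^\prime$-maximal, and when $\mu(i) = i$ no school is $Q_i^\prime$-acceptable. Hence $\mu$ is stable with respect to $\widehat Q$.

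The second step applies student-optimality. Since $DA(\widehat Q)$ is the student-optimal stable matching for the reported profile $\widehat Q$, student $i$ weakly prefers $DA(\widehat Q)(i)$ to $\mu(i)$ under his reported preference $Q_i^\prime$. But under $Q_i^\prime$ the only outcome weakly preferred to $\mu(i)$ is $\mu(i)$ itself (it is the top acceptable entry, or, in the empty-list case, $\mu(i) = i$ is the unique $Q_i^\prime$-maximal element among acceptable outcomes). Therefore $DA(\widehat Q)(i) = \mu(i) = DA(Q)(i)$, which is exactly the assertion.

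I do not expect a genuine obstacle: the content is concentrated in the stability check of the first step, whose point is that deleting from $i$'s list only the schools he ranks strictly below $DA(Q)(i)$ cannot create a new blocking pair — those schools were never part of a blocking pair for $i$ (he already holds something he strictly prefers) and no other student's list has changed. The only thing to handle with care is the degenerate case $DA(Q)(i) = i$, where $Q_i^\prime$ is the empty list, and the arguments above are phrased to cover it without modification. As an alternative one could deduce the lemma directly from strategyproofness of the DA mechanism, treating $Q_i^\prime$ as $i$'s true preference and comparing the reports $Q_i^\prime$ and $Q_i$, but the stability argument above is self-contained.
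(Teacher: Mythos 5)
Your proof is correct. Note that the paper does not prove this lemma at all: it is imported verbatim as Lemma 4.8 of \cite{roth-1990}, so there is no in-paper argument to compare against. Your derivation is the standard one and is complete: you verify that $\mu = DA(Q)$ remains stable under the profile $(Q_i',Q_{-i})$ (the only new blocking pairs could involve $i$, and none can since $\mu(i)$ is ranked first in $Q_i'$, while the empty-list case is handled by individual rationality), and then student-optimality of $DA(Q_i',Q_{-i})$ forces $i$'s assignment to coincide with the top-ranked $\mu(i)$. The alternative you mention---deducing the claim from strategyproofness by treating $Q_i'$ as the true preference---also works as a one-line argument given strategyproofness as a black box, though in \cite{roth-1990} the logical order runs the other way (this lemma is part of the machinery behind the incentive results), so your self-contained stability argument is the cleaner choice.
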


\begin{lemma}[\citealt{roth-1990}, Theorem 5.35]\label{lem:lessstudent}
    Consider two school choice problems $E=(I, S, q, \succsim, R)$ and $E^\prime = (I^\prime, S, q, (\succsim_i)_{i \in I^\prime}, (R_i)_{i \in I^\prime})$, where $I^\prime \subseteq I$. Let $\mu_1$ and $\mu_2$ denote the SOSMs under $E$ and $E^\prime$ respectively. Then, for any school $s \in S$, $\mu_1^{-1}(s) \mathrel{\underbar{\text{$\triangleright$}}_s} \mu_2^{-1}(s)$, and for any student $i \in I^\prime$, $\mu_2(i) \mathrel{R_i} \mu_1(i)$. Symmetrical results are obtained if $S^\prime$ is contained in $S$.
\end{lemma}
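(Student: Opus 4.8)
The plan is to prove both inequalities by reducing to the case of a single additional student and then tracking the proposal (vacancy) chain that this student triggers in the student-proposing DA algorithm. Since $I' \subseteq I$ is finite, I would first argue that it suffices to treat the case $I = I' \cup \{i_0\}$ for a single extra student $i_0$: the general statement follows by peeling off the students in $I \setminus I'$ one at a time and composing the weak comparisons, because ``weakly worse for each retained student'' (via transitivity of each $R_i$) and ``weakly better for each school'' (via transitivity of the responsive order $\mathrel{\underbar{\text{$\triangleright$}}_s}$) are preserved under chaining. This reduction lets me focus entirely on how one added proposer perturbs the SOSM.

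For the single-student step, the key observation is that deleting $i_0$ from $E$ is equivalent to having $i_0$ declare every school unacceptable; hence the SOSM $\mu^\prime$ of $E^\prime$ is exactly the tentative assignment reached in a run of student-proposing DA on $E$ at the moment every student in $I^\prime$ has settled and $i_0$ has not yet proposed. I would then continue that run by letting $i_0$ propose down his list. This initiates a proposal chain: $i_0$ applies to his most preferred acceptable school; whenever a school receives a proposal from a student it ranks above its current least-preferred tentative holder (or has a free seat), it accepts and, if full, rejects that holder, who then proposes further down his own list, and so on. Invoking the standard order-independence / student-optimality of student-proposing DA (any order of proposals halts at the unique SOSM), the state reached when this chain terminates is exactly $\mu$, the SOSM of $E$.

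The heart of the argument is a monotone invariant maintained along this chain. I would show that (i) every school's tentative held set only improves in the responsive order $\mathrel{\underbar{\text{$\triangleright$}}_s}$, since a school never replaces a held student by a less-preferred one and only ever fills vacant seats, so that comparing the start state $\mu^\prime$ with the halt state $\mu$ yields $\mu^{-1}(s) \mathrel{\underbar{\text{$\triangleright$}}_s} \mu^{\prime -1}(s)$ for every $s \in S$; and (ii) every student in $I^\prime$ is only ever displaced downward, since such a student enters the chain only upon being rejected and then proposes strictly down his own list, giving $\mu^\prime(i) \mathrel{R_i} \mu(i)$ for every $i \in I^\prime$. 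To make this rigorous I would track, step by step, that no school is ever worse than it was in $\mu^\prime$ and no retained student is ever better than in $\mu^\prime$, and that the chain terminates because no student proposes to the same school twice (finitely many student--school pairs).

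The step I expect to be the main obstacle is making the vacancy-chain bookkeeping fully rigorous: precisely defining the intermediate tentative states, verifying that the directional changes (downward for students, upward for schools) persist through every rejection and re-proposal, and invoking order-independence to identify the halt state with $\mu$ itself rather than with some arbitrary stable matching of $E$. Once the single-student case is established, the symmetric claim for the situation where $S^\prime \subseteq S$ follows by the dual argument with the two sides of the market interchanged, applying the same vacancy-chain analysis to the equivalent problem in which the removed schools are made unacceptable to all students.
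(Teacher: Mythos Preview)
The paper does not prove this lemma; it merely cites it as Theorem~5.35 of \cite{roth-1990} and uses it as a black box. So there is no ``paper's own proof'' to compare against.

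Your proposed argument is a correct and standard way to establish the result. Reducing to a single added student, identifying $\mu'$ with the tentative state of student-proposing DA on $E$ before $i_0$ has proposed, and then invoking McVitie--Wilson order-independence to identify the halt of the ensuing proposal chain with $\mu$ is exactly right; the monotone invariants you describe (schools only swap up, retained students only move down) go through without difficulty. One small caution on the symmetric case: the phrase ``dual argument with the two sides of the market interchanged'' is misleading, since interchanging sides would give you a statement about the \emph{school}-optimal stable matching, not the SOSM. What actually works is precisely what you write next---make the removed schools unacceptable to all students, observe that the original DA run on $E$ can be replayed step-for-step in the reduced problem (the deleted schools simply reject immediately, and other schools see the same proposal sequence), so that the state at the end of that replay is $\mu$ with the deleted schools' students unmatched; then continue the chain and invoke order-independence again to land at $\mu'$. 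That is a direct argument, not a duality, but it is correct.
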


\subsection{Proof of Theorem \ref{thm:nested}}
\subsubsection*{Proof of Part 1}
This proof follows a similar argument as the proof of Theorem 1 in \cite{ergin-2006}.

Consider any school choice problem $E$ and tier structure $t$, and let $\mu$ be a stable matching with respect to the preferences $R$ in $E$. Define a preference profile $Q = (Q_i)_{i \in I}$ where each student $i$ lists $\mu(i)$ as his only acceptable choice in his stated preference $Q_i$. Under $Q$, each round of the TDA mechanism terminates at the first step, and each student is assigned a seat at his first choice based on the stated preferences. Hence, $TDA(t)(Q) = \mu$.

Next, we show that $Q$ is a Nash equilibrium under the TDA mechanism in two steps. Suppose towards a contradiction that there exists a student $i$ and an alternate preference $Q^\prime_i$ such that $TDA(t)(Q^\prime_i, Q_{-i})(i) \mathrel{P_i} \mu(i)$. Let us denote $TDA(t)(Q^\prime_i, Q_{-i})(i)$ as $s$. 
First, when $i$ reports $Q^\prime_i$, it must be the case that $i$ has not been temporarily accepted by any school before proposing to $s$. This is because, if $i$ were temporarily accepted by any school $s^\prime$ earlier than $s$, then one of two cases would apply: either $|\mu^{-1}(s^\prime)| < q_{s^\prime}$, or $s^\prime$ rejects another student $j$ who then stays unmatched. Under both cases, $i$ will be matched to $s^\prime$ instead of $s$ - a contradiction. Thus, when $i$ proposes to $s$, the current assignments for all other students remain unchanged from the one under $TDA(t)(Q)$. Second, since $i$ is accepted by $s$, two situations arise: either $|\mu^{-1}(s)| < q_{s}$, or there exists a student $j \in \mu^{-1}(s)$ such that $i \succ_s j$. But under both cases, $\mu$ is not stable - a contradiction. Therefore, no alternative preference $Q^\prime_i$ can improve $i$'s outcome over $\mu(i)$. Hence, $Q$ is a TDA Nash equilibrium, and $\mu = TDA(t)(Q)$ is a Nash equilibrium outcome under the TDA mechanism.

\subsubsection*{Proof of Part 2}
\subsubsection*{Step 1: Reshuffling Preferences}
We begin by introducing a mapping $(Q, t) \mapsto \widetilde Q$, such that for any tier structure $t$, the result of the mapped preference profile $\widetilde Q$ under the DA mechanism is equal to the result of the original profile $Q$ under the TDA mechanism, i.e., $DA(\widetilde Q) = TDA(t)(Q)$. 

\begin{definition}
    Given any tier structure $t$, for any student $i$ and preference $Q_i$, the \textbf{reshuffled preference} $\widetilde{Q}_i$ is constructed iteratively:

    Step 1: Take $i$'s acceptable schools in tier 1, rank them in the same order as in $Q_i$, and then place these schools at the first position in $\widetilde Q_i$.

    In general,
   
    Step $k$: Take $i$'s acceptable schools in tier $k$, rank them in the same order as in $Q_i$, and then place these schools at the $k$-th position in $\widetilde Q_i$.
   
    This process continues until there are no acceptable schools left for $i$, at which point rank all remaining schools behind $i$ in $\widetilde Q_i$.\footnote{For instance, in \Cref{exp:TDA}, student 1's reshuffled true preference $\widetilde R_1: a - c - b - 1$, and his reshuffled Nash equilibrium strategy $\widetilde Q_1: c - b - 1 - a$.}
\end{definition}

With a slight abuse of notation, for a preference profile $Q:= (Q_i)_{i \in I}$, we shorthand $\widetilde Q:= (\widetilde Q_i)_{i \in I}$.

\begin{remark} \label{rmk:reshuffled}
    We further explain the reshuffled preference in this remark. 
    Another way to model the TDA mechanism is by having students report their preferences for schools only in the next tier at the start of each round (sequential reporting), instead of listing their preferences for all schools at the beginning (one-shot reporting).
    If no new information is given after each round, then the two models are the same, as discussed in \cite{dur-2018} footnote 22. Then, the reported preferences from the sequential reporting are reshuffled preferences from one-shot reporting.$\hfill\square$
\end{remark}

\begin{lemma}\label{lem:DA&TDA}
    For any preference profile $Q$ and tier structure $t$, the outcome of the TDA mechanism is equal to the outcomes of the TDA and DA mechanisms with reshuffled preference profile, i.e., $$TDA(t)(Q) = TDA(t)(\widetilde Q) = DA(\widetilde Q).$$
\end{lemma}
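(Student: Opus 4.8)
The plan is to prove the two equalities separately, working from the combinatorial structure of the TDA algorithm. For the first equality $TDA(t)(Q) = TDA(t)(\widetilde Q)$, the key observation is that the TDA mechanism, at each round $k$, runs the ordinary DA algorithm using only students' reported rankings \emph{among tier-$k$ schools} (restricted to schools they find acceptable), and in what order a student's tier-$k$ acceptable schools appear is exactly what the reshuffling preserves. So I would argue round by round: assuming the set of students entering round $k$ is the same under $Q$ and $\widetilde Q$ (true for $k=1$ trivially), the DA run inside tier $k$ sees identical inputs — each relevant student's preference order over $S_k$ restricted to acceptable schools is identical by construction of $\widetilde Q_i$ — hence $\mu_k$ is identical, hence the set of students still unassigned and entering round $k+1$ is identical, closing the induction. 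The only subtlety worth a sentence is that a student who lists a tier-$k$ school as unacceptable in $Q_i$ also lists it as unacceptable in $\widetilde Q_i$ (reshuffling only permutes \emph{acceptable} schools and pushes the rest behind $i$), so "acceptable to $i$ within tier $k$" is preserved.

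For the second equality $TDA(t)(\widetilde Q) = DA(\widetilde Q)$, I would invoke the intuition already flagged in the excerpt: each $\widetilde Q_i$ is, by construction, \emph{consistent with} $t$ — every tier-$1$ acceptable school precedes every tier-$2$ acceptable school, which precedes every tier-$3$, and so on, with all unacceptable schools dumped behind $i$. Under such a profile, running plain DA, a student only ever proposes to a later-tier school after being rejected by every acceptable school in all earlier tiers; and once a student is tentatively held by some school, the only way he is displaced is by a higher-priority applicant to that same school, who (being held at some point) has himself exhausted all earlier tiers. The claim is that the tentative assignment after all dust settles among tier-$1$ schools in the DA run coincides with $\mu_1$ from TDA, then among tier-$2$ schools with $\mu_2$, etc. I would make this precise by showing that in the DA run on $\widetilde Q$, no proposal to a tier-$k$ school is ever \emph{rejected} by the arrival of a tier-$(k')$-related displacement for $k' > k$ — because applicants to tier-$k$ schools never themselves apply further up — so the waiting lists of tier-$k$ schools stabilize exactly as in round $k$ of TDA, and students finalized at tier $k$ in TDA are precisely those held by a tier-$k$ school when DA terminates. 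A clean way to organize this is induction on $k$ again: condition on the tier-$(<k)$ part of the DA outcome matching TDA's $\mu_1,\dots,\mu_{k-1}$, observe the set of students who "spill over" to propose to tier-$k$ schools in DA equals the set unassigned after round $k-1$ in TDA, and then the DA subprocess restricted to tier-$k$ schools and these students is literally the DA algorithm of TDA's round $k$.

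The main obstacle is the bookkeeping in the second equality: in the actual DA run on $\widetilde Q$, proposals to different tiers are interleaved in time rather than cleanly separated into rounds, so one must argue that this interleaving is immaterial — the order in which DA processes proposals does not affect the final matching (a standard fact about DA), and once that is granted one may \emph{choose} to process all tier-$1$ proposals to completion before any tier-$2$ proposal, which reproduces TDA's round structure verbatim. I would state the order-independence of DA explicitly (it follows from the lattice/Gale–Shapley structure, or from the observation that DA's outcome is the student-optimal stable matching, independent of processing order) and use it as the bridge. With that in hand, both equalities reduce to the round-by-round induction sketched above, and no real calculation is needed.
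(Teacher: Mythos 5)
Your proposal is correct and follows essentially the same route as the paper: the first equality via preservation of intra-tier rankings (and of acceptability) under reshuffling, and the second via the order-independence of DA's proposal sequence, which lets one process all tier-$1$ proposals to completion before tier-$2$ and so on, so that the DA run on $\widetilde Q$ reproduces TDA's rounds verbatim. The paper packages the order-independence step as an explicit ``staged'' variant of the DA algorithm and then matches stages to TDA rounds by induction, but the substance is identical to your argument.
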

\begin{proof}
    The proof of the first equality is that, in each round of the TDA mechanism, the outcome is affected only by the relative rankings of schools within the same tier, and reshuffling the preferences preserves these intra-tier rankings.

    For the second equality, consider a variation of the DA mechanism under any tier structure and its corresponding reshuffled preference profile $\widetilde Q$.
    \begin{itemize}
        \item In the first stage, every student $i$ proposes to schools in the first tier according to $\widetilde Q_i$. If $i$ has been rejected by all acceptable tier 1 schools in $\widetilde Q_i$, he stops proposing. This stage ends when no more proposals to tier 1 schools are rejected.
        \item In the $k$-th stage, every student $i$ who is unmatched before stage $k$ continues proposing to schools in tier $k$ according to $\widetilde Q_i$. If student $i$ has been rejected by all acceptable tier $k$ schools in $\widetilde Q_i$, he stops proposing. This stage ends when no more proposals to tier $k$ schools are rejected.
        \item The similar process continues until the last tier, then the unmatched students are assigned to themselves.
    \end{itemize}

    Next, we explain that this procedure generates the same matching as in the original DA mechanism. At the beginning of the second stage, every proposing student $i$ has already proposed to all of their acceptable schools in tier 1 under $\widetilde Q_i$, thus none of them will propose to tier 1 schools again. Since these schools never get new proposals after stage 1, students who are matched to tier 1 schools at the end of stage 1 remain unchanged. The same argument applies to the rest stages. Hence, this variation only changes the timing of making proposals, which makes it an equivalent version of the DA mechanism.\footnote{As the proof of Theorem 2 in \cite{gale-1962} does not rely on when students make their proposals, we know this variation also produces the SOSM, which is the same matching produced by the original DA mechanism.}

    The last thing to notice is that the matching at each stage of this DA mechanism is the same as the one in each round of the TDA mechanism. To conduct a proof by induction, we see that the first stage of the DA mechanism matches exactly with the first round of the TDA mechanism. Since the matches with tier 1 schools stay the same in later rounds, we can effectively remove these schools and apply the same base case logic to the reduced set of schools. This step-by-step equivalence completes the proof.
\end{proof}

\subsubsection*{Step 2: Establishing Nash Equilibrium Relation}
In Steps 2 and 3 of the proof of part 2 of \Cref{thm:nested}, to distinguish the reshuffling operation under these two tier structures, we denote $(Q, t) \mapsto \widetilde Q$ as the reshuffling with respect to the finer tier structure $t$, and $(Q, t^\prime) \mapsto \bar Q$ with respect to the coarser tier structure $t^\prime$.

\begin{lemma}\label{lem:tech}
    Given any school choice problem $E$ and tier structure $t$, take any TDA Nash equilibrium $Q$, for any student $i$, the outcome for the reshuffled preference profile under the DA mechanism is equal to the outcome when $i$ reports truthfully, i.e., $$DA(\widetilde{Q}_i, \widetilde{Q}_{-i})(i) = DA(R_i, \widetilde{Q}_{-i})(i).$$
\end{lemma}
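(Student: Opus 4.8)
The plan is to prove the equality by establishing the two weak preferences $DA(R_i, \widetilde Q_{-i})(i) \mathrel{R_i} DA(\widetilde Q_i, \widetilde Q_{-i})(i)$ and $DA(\widetilde Q_i, \widetilde Q_{-i})(i) \mathrel{R_i} DA(R_i, \widetilde Q_{-i})(i)$ separately; since $R_i$ is a linear order, the two together force the two outcomes to coincide. The first inequality is immediate: the DA mechanism is strategyproof, so against the fixed opponent profile $\widetilde Q_{-i}$ reporting $R_i$ is weakly better for $i$ than reporting $\widetilde Q_i$.

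For the second (substantive) inequality I would argue by contradiction. Suppose $s := DA(R_i, \widetilde Q_{-i})(i)$ satisfies $s \mathrel{P_i} DA(\widetilde Q_i, \widetilde Q_{-i})(i)$. By individual rationality of DA with respect to reported preferences, $s \mathrel{R_i} i$, and since the reshuffling operation preserves $i$'s set of acceptable schools, $DA(\widetilde Q_i, \widetilde Q_{-i})(i) \mathrel{R_i} i$ as well; hence $s \mathrel{P_i} i$ and $s$ is a genuine school. Now apply Lemma \ref{lem:topchoice} to the profile $(R_i, \widetilde Q_{-i})$: there is a preference list $Q_i^*$ whose only acceptable school is $s$, with $DA(Q_i^*, \widetilde Q_{-i})(i) = s$. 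A singleton list is trivially consistent with the tier structure $t$ (its reshuffle places $s$ in tier $t_s$ and everything else below $i$, which induces exactly the same DA behavior), and reshuffling acts coordinatewise, so the reshuffle of $(Q_i^*, Q_{-i})$ with respect to $t$ is $(Q_i^*, \widetilde Q_{-i})$ up to DA-equivalence. Lemma \ref{lem:DA&TDA} then gives $TDA(t)(Q_i^*, Q_{-i})(i) = DA(Q_i^*, \widetilde Q_{-i})(i) = s$. But by Lemma \ref{lem:DA&TDA} applied to the equilibrium profile, $TDA(t)(Q)(i) = DA(\widetilde Q)(i) = DA(\widetilde Q_i, \widetilde Q_{-i})(i)$, so $i$ strictly gains under TDA by deviating from $Q_i$ to $Q_i^*$ against $Q_{-i}$, contradicting that $Q$ is a TDA Nash equilibrium. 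This yields the second inequality, and combining the two completes the proof.

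I expect the main obstacle to be the bookkeeping around the singleton list $Q_i^*$: one must check carefully that (i) reshuffling a one-school list does not change the induced DA play, so that Lemma \ref{lem:DA&TDA} can be invoked on $(Q_i^*, Q_{-i})$, and (ii) the reshuffling map is genuinely coordinatewise, so that replacing only $i$'s report affects only $i$'s reshuffled coordinate. Everything else is a routine combination of strategyproofness of DA, Lemma \ref{lem:topchoice}, Lemma \ref{lem:DA&TDA}, and the definition of a TDA Nash equilibrium.
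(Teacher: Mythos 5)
Your proposal is correct and takes essentially the same route as the paper: one direction is strategyproofness of DA, and the other combines the Nash-equilibrium property of $Q$ with Lemma \ref{lem:topchoice}, Lemma \ref{lem:DA&TDA}, and the observation that a singleton list is its own reshuffle — you merely package this second direction as a contradiction where the paper states it directly. One small justification slip: $DA(\widetilde Q_i,\widetilde Q_{-i})(i)\mathrel{R_i} i$ does not follow from reshuffling preserving $i$'s acceptable set, since that set is taken from the possibly misreported $Q_i$; it follows instead from $Q$ being a TDA Nash equilibrium (deviating to the empty list guarantees $i$), and in any case the possibility $s=i$ is already handled by the empty-list branch of Lemma \ref{lem:topchoice}.
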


\begin{proof}
    Consider any school choice problem $E$ and tier structure $t$, and take any TDA Nash equilibrium $Q \in \mathcal{E}^{TDA(t)}(E)$. By definition, for any student $i$ and any preference $Q^\prime_i$, $TDA(t)(Q_i, Q_{-i})(i) \mathrel{R_i} TDA(t)(Q_i^\prime, Q_{-i})(i)$. Replacing each term using \Cref{lem:DA&TDA}, we have $DA(\widetilde Q_i,\widetilde Q_{-i})(i) \mathrel{R_i} DA(\widetilde Q_i^\prime,\widetilde Q_{-i})(i)$.

    Let $Q^\prime_i$ be such that the only acceptable school is $DA(R_i, \widetilde Q_{-i})(i)$, then by \Cref{lem:topchoice}, $DA(Q_i^\prime, \widetilde Q_{-i})(i) = DA(R_i, \widetilde Q_{-i})(i)$. Since $\widetilde Q_i^\prime = Q_i^\prime$, we have $DA(\widetilde Q_i^\prime, \widetilde Q_{-i})(i) = DA(R_i, \widetilde Q_{-i})(i)$, and thus $DA(\widetilde Q_i,\widetilde Q_{-i})(i) \mathrel{R_i} DA(R_i, \widetilde Q_{-i})(i)$.

    Finally, as the DA mechanism is strategy-proof, we have $DA(R_i, \widetilde Q_{-i})(i) \mathrel{R_i} DA(\widetilde Q_i,\widetilde Q_{-i})$. Hence, we recover $DA(\widetilde Q_i, \widetilde Q_{-i})(i) = DA(R_i, \widetilde Q_{-i})(i)$ for any $i \in I$.
\end{proof}

\begin{lemma}\label{lem:NErelation}
    Given any school choice problem $E$ and two tier structures $t$ and $t^\prime$, if $t$ is a refinement of $t^\prime$, then take any TDA Nash equilibrium $Q$ under $t$, the corresponding reshuffled preference profile $\widetilde Q$ is a TDA Nash equilibrium under $t^\prime$, i.e., $$Q \in \mathcal {E}^{TDA(t)}(E) \Rightarrow \widetilde Q \in \mathcal E^{TDA(t^\prime)}(E).$$
\end{lemma}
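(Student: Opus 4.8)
The plan is to reduce the statement to the strategyproofness of the DA mechanism, combining Lemmas \ref{lem:DA&TDA} and \ref{lem:tech} with one structural observation about reshuffling.

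First I would record the key observation: since $t$ is a refinement of $t^\prime$, any preference list that is consistent with $t$ (each acceptable school in an earlier $t$-tier ranked above every acceptable school in a later $t$-tier, all unacceptable schools placed last) is automatically consistent with $t^\prime$. Indeed, $t^\prime_a > t^\prime_b$ implies $t_a > t_b$, so the grouping induced by $t$ never contradicts the grouping induced by $t^\prime$. Each $\widetilde Q_i$ is consistent with $t$ by construction, hence consistent with $t^\prime$; therefore reshuffling $\widetilde Q_i$ with respect to $t^\prime$ returns $\widetilde Q_i$ itself, i.e. $\bar{\widetilde Q} = \widetilde Q$, because that operation only reorders acceptable schools within each $t^\prime$-tier and they are already in that order. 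More generally, for any student $i$ and any report $Q^\prime_i$, reshuffling the profile $(Q^\prime_i, \widetilde Q_{-i})$ with respect to $t^\prime$ changes only coordinate $i$, leaving $\widetilde Q_{-i}$ fixed; denote the resulting $i$-th coordinate by $\bar Q^\prime_i$.

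Next I would apply Lemma \ref{lem:DA&TDA} for the tier structure $t^\prime$ to rewrite both sides of the Nash-equilibrium inequality as DA outcomes: $TDA(t^\prime)(\widetilde Q) = DA(\bar{\widetilde Q}) = DA(\widetilde Q)$, and for any deviation $Q^\prime_i$, $TDA(t^\prime)(Q^\prime_i, \widetilde Q_{-i}) = DA(\bar Q^\prime_i, \widetilde Q_{-i})$. Hence $\widetilde Q \in \mathcal E^{TDA(t^\prime)}(E)$ will follow once we show that for every student $i$ and every $Q^\prime_i$, $DA(\widetilde Q)(i) \mathrel{R_i} DA(\bar Q^\prime_i, \widetilde Q_{-i})(i)$; and since $\bar Q^\prime_i$ ranges over a subset of all preference lists as $Q^\prime_i$ varies, it is enough to prove the stronger statement that $\widetilde Q_i$ is a best response to $\widetilde Q_{-i}$ under the DA mechanism, i.e. $DA(\widetilde Q_i, \widetilde Q_{-i})(i) \mathrel{R_i} DA(Q^{\prime\prime}_i, \widetilde Q_{-i})(i)$ for all $Q^{\prime\prime}_i$.

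Finally, this best-response property is exactly where Lemma \ref{lem:tech} and strategyproofness of DA combine: Lemma \ref{lem:tech}, applicable because $Q$ is a TDA Nash equilibrium under $t$, gives $DA(\widetilde Q_i, \widetilde Q_{-i})(i) = DA(R_i, \widetilde Q_{-i})(i)$, and strategyproofness of DA gives $DA(R_i, \widetilde Q_{-i})(i) \mathrel{R_i} DA(Q^{\prime\prime}_i, \widetilde Q_{-i})(i)$ for every $Q^{\prime\prime}_i$. Chaining the two yields the claim. I do not expect a serious obstacle, since the substance is carried by the earlier lemmas, but the one point requiring care is the idempotence of reshuffling: one must verify that reshuffling with respect to the coarser $t^\prime$ genuinely leaves an already-$t$-consistent profile untouched, since if that failed the identification of the left-hand side with $DA(\widetilde Q)$ would break down. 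This is precisely where the refinement hypothesis is used.
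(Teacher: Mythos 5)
Your proposal is correct and follows essentially the same route as the paper's proof: translate $TDA(t^\prime)$ into $DA$ via Lemma \ref{lem:DA&TDA}, use the refinement hypothesis to show that re-reshuffling with respect to $t^\prime$ leaves $\widetilde Q$ (and in particular $\widetilde Q_{-i}$) unchanged, and then obtain the best-response property from Lemma \ref{lem:tech} combined with the strategyproofness of DA. Your explicit remark that $\bar Q^\prime_i$ ranges over only a subset of preference lists, so that DA-best-response against $\widetilde Q_{-i}$ suffices, is exactly the (implicit) logic of the paper's argument.
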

\begin{proof}
    Consider the school choice problem $E$ and two tier structures $t$ and $t^\prime$, where $t$ is a refinement of $t^\prime$. For any $Q \in \mathcal{E}^{TDA(t)}(E)$, by definition, we have for any student $i \in I$ and any preference $Q^\prime_i$, $TDA(t)(Q_i, Q_{-i})(i) \mathrel{R_i} TDA(t)(Q_i^\prime, Q_{-i})(i)$. By \Cref{lem:DA&TDA}, we have $DA(\widetilde Q_i, \widetilde Q_{-i})(i) \mathrel{R_i} DA(\widetilde Q_i^\prime, \widetilde Q_{-i})(i)$.
   
    By \Cref{lem:tech}, we know $DA(\widetilde Q_i, \widetilde Q_{-i})(i) = DA(R_i, \widetilde Q_{-i})(i)$. Combining with the DA mechanism being strategy-proof, we know for any preference $Q^\prime_i$, $DA(\widetilde Q_i, \widetilde Q_{-i})(i) \mathrel{R_i} DA(\bar Q^\prime_i, \widetilde Q_{-i})(i)$.
   
    Then, we use a series of equalities to get the final result. First, notice for any student $i \in I$ and preference $Q^\prime_i$, $TDA(t^\prime)(\widetilde Q_i, \widetilde Q_{-i})(i) = DA(\bar{\widetilde Q}_i, \bar{\widetilde Q}_{-i})(i) = DA(\widetilde Q_i, \widetilde Q_{-i})(i)$, where the first equality comes from \Cref{lem:DA&TDA} and the second one comes from the fact that $ t$ is a refinement of $t^\prime$. Second, using the same argument, we know that for any student $i \in I$ and preference $Q^\prime_i$, $TDA(t^\prime)(Q_i^\prime, \widetilde Q_{-i})(i) = DA(\bar Q_i, \bar{\widetilde Q}_{-i})(i) = DA(\bar Q_i^\prime, \widetilde Q_{-i})(i)$.
   
    Thus, we have $TDA(t^\prime)(\widetilde Q_i, \widetilde Q_{-i})(i) \mathrel{R_i} TDA(t^\prime)(Q_i^\prime, \widetilde Q_{-i})(i)$ holds for any student $i$ and preference $Q^\prime_i$. Hence, $\widetilde Q$ is a TDA Nash equilibrium under the coarser tier structure.
\end{proof}

\subsubsection*{Step 3: Concluding the Nash Equilibrium Outcome Relation}
Finally, we are ready to prove part 2 of \Cref{thm:nested}.

Given any school choice problem $E$ and two tier structures $t$ and $t^\prime$, such that $ t$ is a refinement of $t^\prime$. Take any matching $\mu \in \mathcal O^{TDA(t)}(E)$, then by definition, there exists $Q \in \mathcal E^{TDA(t)}(E)$ such that $\mu = TDA(t)(Q)$.

By \Cref{lem:NErelation}, $\widetilde Q \in \mathcal{E}^{TDA(t^\prime)}(E)$, and thus $TDA(t^\prime)(\widetilde Q) \in \mathcal{O}^{TDA(t^\prime)}(E)$. By \Cref{lem:DA&TDA}, $TDA(t)(Q) = DA(\widetilde Q)$ and $TDA(t^\prime)(\widetilde Q) = DA(\bar{\widetilde Q})$. Since $ t$ is a refinement of $t^\prime$, $DA(\bar{\widetilde Q}) = DA({\widetilde Q})$. Thus, $\mu = TDA(t)(Q) = TDA(t^\prime)(\widetilde Q) \in \mathcal{O}^{TDA(t^\prime)}(E)$.
   
\subsection{Proof of Theorem \ref{thm:implementation}}
We first define a non-bossy mechanism and show three related lemmas.

\begin{definition}[\citealt{satterthwaite-1981}]
    A mechanism $f$ is \textbf{non-bossy} if for any school choice problem $E = (q, \succsim, R)$, for any student $i \in I$ and preference $Q_i \in \mathcal{R}$, $f(q, \succsim, R)(i) = f(q, \succsim, Q_i, R_{-i})(i)$ implies $f(q, \succsim, R) = f(q, \succsim, Q_i, R_{-i})$.
\end{definition}

\begin{lemma}[\citealt{ergin-2002}, Theorem 1] \label{lem:nonbossy}
    For any school choice problem $E$, if the priority structure $(q, \succsim)$ is acyclic, then the DA mechanism is non-bossy.
\end{lemma}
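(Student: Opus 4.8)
The plan is to prove the contrapositive of Lemma~\ref{lem:nonbossy}: if the DA mechanism is bossy under a priority structure $(q,\succsim)$, then $(q,\succsim)$ contains a cycle in the sense of Definition~\ref{def:acyclicity}. Fix a witness to bossiness --- a student $i$, a preference profile $Q$, and a report $Q'_i$ --- so that $\mu:=DA(Q)$ and $\nu:=DA(Q'_i,Q_{-i})$ satisfy $\mu(i)=\nu(i)=:s$ but $\mu\ne\nu$.

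\emph{Step 1 (normalization).} Let $P_i$ be the report that declares $s$ the only acceptable school (the empty list if $s=i$); by Lemma~\ref{lem:topchoice}, $DA(P_i,Q_{-i})(i)=s$. A direct check shows that $\mu$ and $\nu$ are both stable for the economy $(P_i,Q_{-i})$: individual rationality and non-wastefulness are immediate, any justified envy by a student $j\ne i$ is already justified envy in $\mu$'s (resp.\ $\nu$'s) own economy, and $i$ cannot have justified envy because nothing is ranked above $s$ in $P_i$. Hence the student-optimal matching $DA(P_i,Q_{-i})$ weakly student-dominates both $\mu$ and $\nu$, and since $\mu\ne\nu$ it differs from at least one of them; relabel so that it differs from $\mu$. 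Because $\mu(i)=s$, truncating $Q_i$ below $s$ does not change $\mu$, so I may assume that $i$'s report producing $\mu$ lists some schools $a_1,\dots,a_m$ and then $s$, with $m\ge 1$ (if $m=0$ this report is $P_i$ and $\mu=\nu$) and with $i$ rejected by each $a_\ell$ in that run. Writing $\nu$ henceforth for $DA(P_i,Q_{-i})$, I now have two distinct stable matchings $\mu,\nu$ of the single economy $(P_i,Q_{-i})$ with $\nu$ student-optimal, $\nu$ weakly student-dominating $\mu$, and $\mu(i)=\nu(i)=s$.

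\emph{Step 2 (locating the cycle).} Since $i$ is rejected by $a_1$ along the run producing $\mu$, the school $a_1$ is full under $\mu$ and all its $q_{a_1}$ assignees are ranked above $i$; by the rural-hospitals theorem $a_1$ is also full under $\nu$, and by the standard opposition of interests on the lattice of stable matchings of $(P_i,Q_{-i})$ the school $a_1$ is weakly better off under $\mu$ than under $\nu$. Since $\mu\ne\nu$, their symmetric difference is nonempty, avoids $i$, and breaks into disjoint cyclic exchanges (``rotations''). Tracing the rotation that meets $a_1$ --- and, where needed, the segment of the rejection chain in the $\mu$-run that carries $i$ off $a_1$ and on through $a_2,\dots,a_m$ --- I would extract two distinct schools and three students realizing the pattern $x\succ_{a_1}y\succ_{a_1}z$ and $z\succ_{b}x$ of Definition~\ref{def:acyclicity}: $a_1$ as the first school, $b$ as the school where the rotation turns around, $i$ itself as the middle student $y$, and the co-assignees at $a_1$ and $b$ other than $x,y,z$ supplying the sets $I_{a_1},I_b$ of the required sizes $q_{a_1}-1$ and $q_b-1$.

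The hard part is exactly this extraction. A rotation relating $\mu$ and $\nu$ can a priori involve many agents, whereas a cycle in Definition~\ref{def:acyclicity} involves only three students and two schools, so one must argue that at the relevant place the rotation has length two and that the specific priority comparisons hold. This is where the bossiness hypothesis pays off: the equality $\mu(i)=\nu(i)$, together with $i$ being forced off $a_1$ in the $\mu$-run, determines which student re-enters $a_1$ when $i$ switches from $P_i$ back to its longer report, and forces $x\succ_{a_1}i\succ_{a_1}z$ and $z\succ_b x$. Adapting the bookkeeping to $q_{a_1}>1$ or $q_b>1$ (to produce $I_{a_1},I_b$) and to the degenerate case $s=i$ takes care but no new idea. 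An alternative would be to go through the other equivalences in \cite{ergin-2002}'s theorem --- acyclicity is also equivalent to Pareto efficiency of the student-optimal matching and to group strategyproofness of DA --- but since the step from group strategyproofness to non-bossiness is not itself automatic, the contrapositive route above is the more self-contained one.
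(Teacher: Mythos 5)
The paper does not prove this lemma at all: it is imported verbatim as Theorem~1 of \cite{ergin-2002}, so there is no in-paper argument to compare yours against. Judged on its own terms, your attempt has a genuine gap exactly where you flag ``the hard part.'' Step~1 is fine (both matchings are stable in the economy $(P_i,Q_{-i})$, the truncation below $s$ is harmless, and the relabeling is legitimate), and the facts you invoke in Step~2 (rural hospitals, opposition of interests, fullness of $a_1$ under $\mu$ with all assignees above $i$) are all correct. But the entire content of the hard direction of Ergin's theorem is the extraction you only describe in the conditional mood: showing that the discrepancy between $\mu$ and $\nu$, together with the rejection chain that carries $i$ through $a_1,\dots,a_m$, yields \emph{two} schools and \emph{three} students satisfying the precise pattern $x\succ_{a_1} i\succ_{a_1} z$, $z\succ_b x$ of Definition~\ref{def:acyclicity}, with the scapegoat sets $I_{a_1},I_b$ of sizes $q_{a_1}-1$ and $q_b-1$. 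A rotation between two stable matchings can involve arbitrarily many schools and students, and collapsing it to a length-two cycle with the stated priority comparisons is not a bookkeeping matter --- it is the theorem. As written, the claims ``the rotation has length two at the relevant place'' and ``the bossiness hypothesis forces $x\succ_{a_1}i\succ_{a_1}z$ and $z\succ_b x$'' are assertions, not deductions, so the proof is incomplete.

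Two smaller remarks. First, your dismissal of the route via group strategyproofness is too quick: for strict preferences, ``group strategyproof $\Rightarrow$ non-bossy'' \emph{is} essentially automatic (if $i$'s misreport leaves $i$ indifferent but moves some $j$, then in whichever profile $j$ is worse off the coalition $\{i,j\}$ profitably deviates to the other), so citing Ergin's equivalence ``acyclic $\Leftrightarrow$ group strategyproof'' and adding that one-line observation would be a complete and much shorter proof --- and is in effect what the paper's citation does. Second, if you do want a self-contained contrapositive argument, the standard device is not the rotation decomposition but an induction on the steps of the DA run (locate the last step at which a relevant rejection occurs and read the cycle off the two students involved in that rejection and the school the displaced student falls to); that is the structure of Ergin's own proof and is what your Step~2 would need to become.
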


\begin{lemma}[\citealt{haeringer-2009}, Lemma A.2] \label{lem:IR&NW}
    For any school choice problem $E$, all Nash equilibrium outcomes under the DA mechanism are individually rational and non-wasteful.
\end{lemma}
   
\begin{lemma} \label{lemma:kcycle}
    For any school choice problem $E$ and tier structure $t$, take any tier $k$, if there are no cycles within tier $k$, then for any $\mu \in \mathcal O^{TDA(t)}(E)$, $\mu$ does not have any blocking pair $(i, s)$ such that school $s$ is in tier $k$.
\end{lemma}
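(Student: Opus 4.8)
The plan is to prove the contrapositive: if some $\mu\in\mathcal O^{TDA(t)}(E)$ has a blocking pair $(i,s)$ with $t_s=k$, then $(q_{S_k},\succsim_{S_k})$ has a cycle. This formalizes the rejection-chain heuristic already sketched after Theorem~\ref{thm:partialequi}; the two choices that make it run cleanly are to pass to the reshuffled profile and argue entirely inside DA, and to have $i$ deviate to the report listing only $s$ as acceptable. First I would set up and reduce to justified envy. Fix a TDA Nash equilibrium $Q$ with $\mu=TDA(t)(Q)$ and let $\widetilde Q$ be its reshuffling with respect to $t$. By Lemma~\ref{lem:DA&TDA}, $\mu=DA(\widetilde Q)$; by Lemma~\ref{lem:tech} and strategyproofness of DA, $\widetilde Q$ is a Nash equilibrium of the DA game at $E$, so $\mu\in\mathcal O^{DA}(E)$ and hence (by \citealt{haeringer-2009}) $\mu$ is individually rational and non-wasteful with respect to $R$. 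Therefore the blocking pair is one of justified envy: $s\mathrel{P_i}\mu(i)$, $|\mu^{-1}(s)|=q_s$, and $i\succ_s j$ for some $j\in\mu^{-1}(s)$; individual rationality also yields $s\mathrel{P_i}i$, so ``list only $s$'' is a legitimate deviation for $i$ that genuinely improves on $\mu(i)$ if it delivers $s$.

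Next I would produce the rejection chain. Let $i$ deviate to the report $Q_i'$ listing only $s$, played against $\widetilde Q_{-i}$. Since $\widetilde Q$ is a DA Nash equilibrium and $s\mathrel{P_i}\mu(i)$, we cannot have $DA(Q_i',\widetilde Q_{-i})(i)=s$, so $DA(Q_i',\widetilde Q_{-i})(i)=i$. Running this DA with $i$ processed last, the other students first converge to $\nu:=DA(\widetilde Q_{-i})$ and then $i$ proposes to $s$. I claim $i$ is tentatively accepted at $s$: by Lemma~\ref{lem:lessstudent}, deleting $i$ cannot make $s$ strictly better off, so under $\nu$ either $s$ has a vacancy or its lowest-priority assigned student is no higher than $j$; in the first case $i$ would end up at $s$, contradicting $DA(Q_i',\widetilde Q_{-i})(i)=i$, so the second holds and, as $i\succ_s j$, $i$ displaces that student $c_1\prec_s i$. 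Since $s$ is $i$'s only acceptable school but $i$ is ultimately unmatched, the ensuing rejection chain must come back to $s$ and eject $i$, with the returning proposer $c_m$ satisfying $c_m\succ_s i$ (in particular $c_1\ne c_m$).

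The crux is then to confine the chain to tier $k$ and read off a cycle. Because every $\widetilde Q_l$ is consistent with $t$, the run $\nu$ has the phase structure from Lemma~\ref{lem:DA&TDA}: $c_1\in\nu^{-1}(s)$ was matched in tier $k$'s phase, so after being ejected from $s$ he proposes only to tier-$\ge k$ schools, and once the chain reaches any tier-$>k$ school it can never again touch a tier-$k$ school. Since the chain does return to $s\in S_k$, it is entirely confined to tier $k$: it reads $i\to s$ (eject $c_1$), $c_1\to s_1$ (eject $c_2$), \dots, $c_{m-1}\to s_{m-1}$ (eject $c_m$), $c_m\to s$ (eject $i$), with $s,s_1,\dots,s_{m-1}\in S_k$, yielding $i\succ_s c_1\succ_{s_1}c_2\succ_{s_2}\cdots\succ_{s_{m-1}}c_m\succ_s i$. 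From this chain, together with the size-$(q-1)$ bottoms of the waiting lists present at each ejection (which supply the auxiliary sets $I_a,I_b$ of Definition~\ref{def:acyclicity}), one extracts a bona fide cycle lying inside $(q_{S_k},\succsim_{S_k})$ — exactly the rejection-chain-to-cycle construction underlying the acyclicity characterizations of \citet{ergin-2002} and \citet{haeringer-2009} — contradicting the hypothesis.

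The step I expect to be the main obstacle is this last extraction: turning a possibly long ``preference cycle'' along the chain into a genuine two-school, three-student cycle meeting all the disjointness and fullness requirements; I would do it as in \citet{ergin-2002}/\citet{haeringer-2009}, by passing to a minimal offending configuration and shortcutting it. A smaller but real point needing care is the claim that $i$ must enter $s$ against $\nu$: it matters here that $\nu=DA(\widetilde Q_{-i})$ is the DA outcome of the full economy with $i$ deleted, so that Lemma~\ref{lem:lessstudent} applies directly; arguing instead with the pool entering round $k$ of the TDA run would be awkward, since that pool can change in complicated ways once $i$ stops applying in earlier tiers.
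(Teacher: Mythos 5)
Your setup is sound and, up to the last step, close in spirit to what the paper does: passing to the reshuffled profile $\widetilde Q$, noting via Lemma~\ref{lem:tech} and strategyproofness that $\widetilde Q$ is a DA Nash equilibrium so that $\mu$ is individually rational and non-wasteful and the blocking pair must be justified envy, having $i$ deviate to the list containing only $s$, and confining the resulting rejection chain to tier $k$ by the phase structure of Lemma~\ref{lem:DA&TDA} are all correct moves. The gap sits exactly where you flag ``the main obstacle,'' and it cannot be outsourced: turning the tier-$k$ chain $i\succ_s c_1\succ_{s_1}c_2\succ_{s_2}\cdots\succ_{s_{m-1}}c_m\succ_s i$ into a cycle in the sense of Definition~\ref{def:acyclicity} requires collapsing a string of priority comparisons spread over many schools into a single relation at one school. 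Taking $a=s$ gives $c_m\succ_s i\succ_s c_1$, but you then need $c_1\succ_b c_m$ at a single tier-$k$ school $b\neq s$, plus disjoint auxiliary sets of sizes $q_s-1$ and $q_b-1$; nothing in the chain supplies this when $m\ge 2$, and the ``minimal configuration / shortcutting'' you invoke is precisely the technical heart of the lemma, left unwritten. Citing \cite{ergin-2002} and \cite{haeringer-2009} does not close it either, because those papers do not prove this direction by extracting cycles from rejection chains: they establish ``acyclic $\Rightarrow$ good behavior'' by exploiting structural consequences of acyclicity such as non-bossiness and group strategyproofness.

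That is exactly how the paper avoids the chain altogether, and it is the fix I would recommend. By Lemma~\ref{lem:nonbossy}, acyclicity of tier $k$ makes the tier-$k$ DA round non-bossy. Compare $i$ reporting the empty list with $i$ reporting only $s$: in both cases $i$ ends up unmatched (the latter because $Q$ is a Nash equilibrium and $s\mathrel{P_i}\mu(i)$), the two runs coincide before tier $k$, and non-bossiness forces the tier-$k$ matching of everyone else --- and hence all later rounds --- to coincide as well. Lemma~\ref{lem:lessstudent} shows that under the empty-list deviation $s$ still holds a vacancy or a student with lower priority than $i$; the same is then true under the ``only $s$'' deviation, contradicting the stability of that round's DA output with respect to the reported preferences. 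Restructuring your argument around non-bossiness in this way removes the unproved extraction step entirely.
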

\begin{proof}    
    Suppose there exist such $s \in S_k$ and $\mu \in \mathcal O^{TDA(t)}(E)$ such that $(i, s)$ is a blocking pair under $\mu$. By definition there exists $Q \in \mathcal E^{TDA(t)}(E)$ such that $TDA(t)(Q) = \mu$.

    Suppose that student $i$ reports the empty list $Q_i^\prime$ where he lists no school as acceptable, then $TDA(t)(Q_i^\prime, Q_{-i})(i) = i$. By \Cref{lem:DA&TDA}, $TDA(t)(Q) = DA(\widetilde Q)$ and $TDA(t)(Q_i^\prime, Q_{-i}) = DA(\widetilde{Q}_i^\prime,\widetilde{Q}_{-i})$. By \Cref{lem:lessstudent}, $
    DA(\widetilde Q)^{-1}(s) \mathrel{\underbar{\text{$\triangleright$}}_s}  DA(\widetilde{Q}_i^\prime,\widetilde{Q}_{-i})^{-1}(s)
    $. Thus, $i$ still has higher priority than someone (or the empty seat) that is finally matched to $s$ under $TDA(t)(Q_i^\prime, Q_{-i})$.

    Suppose $i$ reports $Q_i^*$ which lists $s$ as the only acceptable school, then $TDA(t)(Q_i^*, Q_{-i})(i) = i$ as $Q$ is a Nash equilibrium strategy. Then, before the TDA mechanism reaches $s$'s tier, all steps are the same as under $TDA(t)(Q_i^\prime, Q_{-i})$. When the TDA mechanism reaches $s$'s tier, since this round of DA is non-bossy by \Cref{lem:nonbossy} and $i$ remains unmatched, the matches of others stay the same as under $TDA(t)(Q_i^\prime, Q_{-i})$ at the end of this round. After $s$'s tier, all steps remain the same. Thus, $TDA(t)(Q_i^\prime, Q_{-i}) = TDA(t)(Q_i^*, Q_{-i})$.

    As we have analyzed, under $TDA(t)(Q_i^*, Q_{-i})$, $i$ still has higher priority than someone (or the empty seat) that is finally matched to $s$. Thus, $TDA(t)(Q_i^*, Q_{-i})$ is not stable under $(Q_i^*, Q_{-i})$ - a contradiction.
\end{proof}

Then, we are ready to prove \Cref{thm:implementation}.

For the ``only if'' direction, suppose towards a contradiction that there exist a school choice problem $E$ and tier structure $t$ where $(q, \succsim, t)$ is within-tier acyclic, and there exists an unstable $\mu \in \mathcal{O}^{TDA(t)}(E)$. Following from \Cref{thm:nested} and \Cref{lem:IR&NW}, $\mu$ has at least one blocking pair $(i, s)$. By \Cref{lemma:kcycle}, we know that school $s$ does not belong to any tier - a contradiction.

For the ``if'' direction, we prove it by contrapositive, using a similar argument as the proof of Theorem 1 in \cite{ergin-2002}. Let $(I, S, q, \succsim, t)$ be given. Suppose $(q, \succsim, t)$ is not within-tier acyclic, i.e., there exist cycles within some tiers. Take any of such cycles, which involves $a, b, i, j, k, I_a$ and $I_b$ as in \Cref{def:acyclicity}, and denote the corresponding tier by $k$. Consider a preference profile $R$ where students in $I_a$ and $I_b$ respectively rank $a$ and $b$ as their top choice, and the preferences of $i, j$ and $k$ are such that $b \mathrel{P_i} a \mathrel{P_i} i$, $a \mathrel{P_j} j$ and $a \mathrel{P_k} b \mathrel{P_k} k$.\footnote{The preferences behind being unmatched can be arbitrary. We omit them here for simplicity.} Finally, let students outside $I_a \cup I_b \cup \{i, j, k\}$ prefer not to be assigned to any school. Then, the TDA outcome $\mu$ for $R$ is such that $\forall l \in I_a \cup \{i\}: \mu(l) = a$, $\forall l \in I_b \cup \{k\}: \mu(l) = b$ and $\forall l \in I \setminus (I_a \cup I_b \cup \{i, k\}): \mu(l) = l$.

Consider a Nash equilibrium preference profile $Q$ where all students except $j$ report $R$, and $j$ reports prefer not to be assigned to any school. Then, the corresponding TDA Nash equilibrium outcome $\mu^\prime$ for $Q$ is such that $\forall l \in I_a \cup \{k\}: \mu^\prime(l) = a$, $\forall l \in I_b \cup \{i\}: \mu^\prime(l) = b$ and $\forall l \in I \setminus (I_a \cup I_b \cup \{i, k\}): \mu^\prime(l) = l$. Now, notice that $\mu^\prime$ is not stable with respect to $R$, as $(j, a)$ forms a blocking pair.

\subsection{Proof of Theorem \ref{thm:unique}}
The ``if'' direction directly follows from \Cref{thm:implementation}: for any school choice problem $E$ and tier structure $t$, since $\mathcal{O}^{TDA(t)}(E) = \mathcal{S}(E)$, we know that all schools prefer any stable matching over the SOSM.

The ``only if'' direction is proven by contrapositive. Fix any school $s^* \in S^*$. 

Suppose there is another school $s^\prime$ within $s^*$'s tier, then we can construct a $(q, \succsim, R)$ that makes $s^*$ worse off.
Take any three students $i, j, k \in I$. Let $q_s = 1$ for any $s \in S$. Let $\succsim_{s^*}$ be $i \mathrel{\succsim_{s^*}} j \mathrel{\succsim_{s^*}} k \mathrel{\succsim_{s^*}} l$, where $l \in I \backslash \{i, j, k\}$. Let $\succsim_{s^\prime}$ be $k \mathrel{\succsim_{s^\prime}} i \mathrel{\succsim_{s^\prime}} l$, where $l \in I \backslash \{i, k\}$. 
Let $R_i$ be $s^\prime \mathrel{P_i} s^* \mathrel{P_i} i$; $R_j$ be $s^* \mathrel{P_j} j$; $R_k$ be $s^* \mathrel{P_k} s^\prime \mathrel{P_k} k$; $R_l$ be such that no school is acceptable to $l$.\footnote{The preferences behind being unmatched can be arbitrary. We omit them here for simplicity. The same applies to the preferences in the next paragraph.} Then the SOSM is $((i, s^*), (k, s^\prime))$. 
One Nash equilibrium $Q$ is such that $Q_i = R_i$, $Q_k = R_k$, $Q_l = R_l$, $Q_j$ is such that no schools is acceptable. Then $TDA(t)(Q) = ((i, s^\prime), (k, s^*))$. This makes $s^*$ strictly worse off.

Suppose there are two schools $s_1$ and $s_2$ in a tier other than $s^*$'s tier, then we can construct a $(q, \succsim, R)$ that makes $s^*$ worse off.
Take any three students $i, j, k \in I$. Let $q_s = 1$ for any $s \in S$. Let $\succsim_{s^*}$ be $k \mathrel{\succsim_{s^*}} j \mathrel{\succsim_{s^*}} l$, where $l \in I \backslash \{j, k\}$. Let $\succsim_{s_1}$ be $i \mathrel{\succsim_{s_1}} j \mathrel{\succsim_{s_1}} k \mathrel{\succsim_{s_1}} l$, where $l \in I \backslash \{i, j, k\}$. Let $\succsim_{s_2}$ be $k \mathrel{\succsim_{s_2}} i \mathrel{\succsim_{s_2}} l$, where $l \in I \backslash \{i, k\}$.
Let $R_i$ be $s_2 \mathrel{P_i} s^* \mathrel{P_i} s_1 \mathrel{P_i} i$; $R_j$ be $s_1 \mathrel{P_j} s^* \mathrel{P_j} s_2 \mathrel{P_j} j$; $R_k$ be $s_1 \mathrel{P_k} s^* \mathrel{P_k} s_2 \mathrel{P_k} k$; $R_l$ be such that no school is acceptable to $l$. Then the SOSM is $((i, s_2), (j, s_1), (k, s^*))$.
One Nash equilibrium $Q$ is such that $Q_i: s_2 \mathrel{Q_i} s_1 \mathrel{Q_i} i$, $Q_j: s^* \mathrel{Q_j} j$, $Q_k: s_1 \mathrel{Q_k} s_2 \mathrel{Q_k} k$, $Q_l = R_l$. Then $TDA(t)(Q) = ((i, s_2), (j, s^*), (k, s_1))$. This makes $s^*$ strictly worse off.

\subsection{Proof of Proposition \ref{prop:st-pf}}
The ``if'' direction is a corollary of \Cref{lem:DA&TDA}. Since the TDA mechanism is outcome-equivalent to the DA mechanism, and the DA mechanism is strategy-proof, we know the TDA mechanism is also strategy-proof.

For the ``only if'' direction, we show that if there exists $i \in I$ such that $R_i \not \in \mathcal{R}^t$, then the TDA mechanism with $t$ is not strategy-proof. 

Since $R_i \not \in \mathcal{R}^t$, there exist schools $s_1$ and $s_2$ such that $t_{s_1} < t_{s_2}$ and $s_2 \mathrel{P_i} s_1$.
Let $\bar S := \{s \in S: s \mathrel{P_i} s_2 \text{ and } t_s \le t_{s_2}\}$ be the set of schools that $i$ strictly prefers to $s_2$ and are in the same or an earlier tier than $s_2$. Then, construct a $(q, \succsim, R_{-i})$ as follows:
\begin{itemize}
    \item For any $s \in \bar S$, $q_s = 0$; $q_{s_2} = q_{s_1} > 0$; other schools' quotas can be arbitrary.\footnote{If we do not allow for zero quotas, we need to assume that $\sum_{s \in \bar S} q_s < |I|$. Otherwise, $i$ could always be admitted to a school in $\bar S$, because we assume all schools rank all students as acceptable.}
    \item $\succsim_{s_2}$ ranks $i$ the highest, other schools' priorities can be arbitrary.
    \item All students $j \in I \backslash \{i\}$ rank $s_1$ and $s_2$ as unacceptable, the preferences for other schools can be arbitrary.
\end{itemize}
Then, $TDA(t)(q, \succsim, R_i, R_{-i})(i) = s_1$.

Suppose $i$ misreports another preference $Q_i$ under which he keeps all preferences in $R_i$ unchanged except for ranking $s_1$ as unacceptable. Then, $TDA(t)(q, \succsim, Q_i, R_{-i})(i) = s_2$, which is preferred by $i$ over $s_1$. Thus, the TDA mechanism with $t$ is not strategy-proof.

\subsection{Proof of Proposition \ref{prop:stable-tier}}
Fix any school choice problem $E$ and tier structure $t$. First, we show that $TDA(t)(R)$ is individually rational. For any $i \in I$, if $s := TDA(t)(R)(i) \in S$, then at some step in the TDA mechanism, $i$ is tentatively accepted by $s$, meaning $i$ and $s$ must be mutually acceptable.

Next, suppose towards a contradiction that there exists a blocking pair $(i, s)$ such that $s \mathrel{P_i} TDA(t)(R)(i)$ and $t_{TDA(t)(R)(i)} \ge t_s$. Since $s \mathrel{P_i} TDA(t)(R)(i)$, $i$ must have proposed to $s$ at some step of the mechanism, and $s$ must have rejected $i$ at some step $k$. Consider possible reasons for this rejection. Given the assumption of $\succsim$, all students are acceptable to $s$. Thus, at step $k$, we must have $q_s = |TDA(t)^{-1}(R)(s)|$ and $j \succ_s i$ for any $j$ that is tentatively accepted by $s$ at this step. Hence, $(i, s)$ is not a blocking pair - a contradiction.

\subsection{Proof of Proposition \ref{prop:weakly}}
Fix $Q_i$ and $Q_i^*$ as defined in \Cref{prop:weakly}, and fix any reporting of others $Q_{-i}$. 
Let us denote the outcome at the end of Round $k$ under the TDA mechanism as $TDA(t)^k$. 

Before the TDA mechanism reaches tier $t_s$, all steps are the same across $(Q_i, Q_{-i})$ and $(Q_i^*, Q_{-i})$, since the two preferences only differ tier $t_s$. Thus, $TDA(t)^{t_s-1}(Q_i, Q_{-i}) = TDA(t)^{t_s-1}(Q_i^*, Q_{-i})$.

If the TDA mechanism reaches tier $t_s$, meaning $TDA(t)^{t_s-1}(Q_i, Q_{-i})(i) = i$, three cases could occur:
\begin{enumerate}
    \item student $i$ is matched to the same school $s^*$ under $Q_i$ and $Q_i^*$, i.e., $TDA(t)^{t_s}(Q_i, Q_{-i})(i) = TDA(t)^{t_s}(Q_i^*, Q_{-i})(i) = s^*$;
    \item $i$ is unmatched in this tier, i.e., $TDA(t)^{t_s}(Q_i, Q_{-i})(i) = TDA(t)^{t_s}(Q_i^*, Q_{-i})(i) = i$;
    \item $i$ is matched to $s_1$ under $Q_i$ and to $s_2$ under $Q_i^*$ (possibly $s_1 = i$ or $s_2 = i$), i.e., $TDA(t)^{t_s}(Q_i, Q_{-i})(i) = s_1$ and $TDA(t)^{t_s}(Q_i^*, Q_{-i})(i) = s_2$.
\end{enumerate}

If Case 1 happens, we know $TDA(t)(Q_i, Q_{-i})(i) = TDA(t)(Q_i^*, Q_{-i})(i) = s^*$. 

If Case 2 happens, we know $TDA(t)^{t_s}(Q_i, Q_{-i})$ is stable with respect to $(\widetilde{Q}_i^*, \widetilde Q_{-i})$, and thus $TDA(t)^{t_s}(Q_i^*, Q_{-i})(j) \mathrel{\widetilde Q_j} TDA(t)^{t_s}(Q_i, Q_{-i})(j)$ for any $j \in I \backslash \{i\}$. Similarly, $TDA(t)^{t_s}(Q_i^*, Q_{-i})$ is stable with respect to $(\widetilde Q_i, \widetilde Q_{-i})$, and thus $TDA(t)^{t_s}(Q_i, Q_{-i})(j) \mathrel{\widetilde Q_j} TDA(t)^{t_s}(Q_i^*, Q_{-i})(j)$. Therefore, $TDA(t)^{t_s}(Q_i^*, Q_{-i})(j) = TDA(t)^{t_s}(Q_i, Q_{-i})(j)$ for any $j \in I \backslash \{i\}$, which implies $TDA(t)^{t_s}(Q_i^*, Q_{-i}) = TDA(t)^{t_s}(Q_i, Q_{-i})$. Since under the TDA mechanism, all steps after tier $t_s$ are the same across $(Q_i, Q_{-i})$ and $(Q_i^*, Q_{-i})$, we know $TDA(t)(Q_i, Q_{-i})(i) = TDA(t)(Q_i^*, Q_{-i})(i)$.

If Case 3 happens, since the DA mechanism is strategy-proof, $s_2 \mathrel{Q_i^*} s_1$. Then, because $s_1 \mathrel{P_i} s_2 \iff s_1 \mathrel{Q^*_i} s_2$, we have $s_2 \mathrel{P_i} s_1$, and thus $TDA(t)(Q^*_i, Q_{-i})(i) \mathrel{P_i} TDA(t)(Q_i, Q_{-i})(i)$.

Thus, we know under any $Q_{-i}$, $TDA(t)(Q_i^*, Q_{-i}) \mathrel{R_i} TDA(t)(Q_i, Q_{-i})$ holds.

\section{Additional Examples} \label{apx:B}
\begin{example}[Moving a school from a later tier to an earlier tier may make it worse off] \label{exp:move}
    Consider three students, $1, 2, 3$, and three schools, $a, b, c$, each with one seat. Student preferences $R$ as well as school priorities $\succsim$ are as follows:
    \begin{multicols}{2}
    \begin{center}
    \begin{tabular}{c|c|c}
        \textbf{$R_1$} & \textbf{$R_2$} & \textbf{$R_3$} \\
        \hline
        $a$ & $b$ & $b$ \\
        $b$ & $a$ & $c$ \\
        $c$ & $c$ & $a$ \\
        $1$ & $2$ & $3$ \\
    \end{tabular}
    \end{center}
   
    \columnbreak
    \begin{center}
    \begin{tabular}{c|c|c}
        \textbf{$\succsim_a$} & \textbf{$\succsim_b$} & \textbf{$\succsim_c$} \\
        \hline
        $3$ & $1$ & $3$ \\
        $1$ & $2$ & $2$ \\
        $2$ & $3$ & $1$ \\
    \end{tabular}
    \end{center}
    \end{multicols}

    Under the tier structure $t = (1, 2, 2)$, $\mathcal{O}^{TDA(t)}(E) = \{((1, a), (2, b), (3, c))\}$. However, under the tier structure $t^\prime = (1, 1, 2)$ where school $b$ is moved to the first tier, $\mathcal{O}^{TDA(t^\prime)}(E) = \{((1, a), (2, b), (3, c)), ((1, a), (2, c), (3, b))\}$. Thus, school $b$ obtains one strictly worse-off equilibrium outcome after being moved to the earlier tier.$\hfill\square$
\end{example}

\begin{example}[Within-tier cycles can be allowed when the priority structure is known] \label{exp:S^*}
    Consider three students, $1, 2, 3$, and three schools, $a, b, c$, each with one seat. School $a$ is in tier 1, and schools $b$ and $c$ are in tier 2. The priority profile $\succsim$ is shown below:
    \begin{center}
    \begin{tabular}{c|c|c}
        \textbf{$\succsim_a$} & \textbf{$\succsim_b$} & \textbf{$\succsim_c$} \\
        \hline
        $2$ & $1$ & $3$ \\
        $1$ & $2$ & $2$ \\
        $3$ & $3$ & $1$ \\
    \end{tabular}
    \end{center}

    Suppose $S^* = \{a\}$. It can be checked that under any preference profile $R \in \mathcal{R}^3$, any Nash equilibrium outcome of the TDA mechanism is weakly better for school $a$ than the corresponding SOSM. Interestingly, the same result holds when the tier structure is changed to $t^\prime = (2, 1, 1)$.\footnote{We verify these results by traversing the set of all possible preference profiles and computing the corresponding Nash equilibrium outcomes. The verification program is available upon request.}$\hfill\square$
\end{example}

\begin{example}[Preference uncertainty] \label{exp:prefun}
    There are three students, $1, 2, 3$, and three schools, $a, b, c$, each of which has one seat. School $a$ belongs to tier 1, and the rest schools are in tier 2. The priority profile $\succsim$ is shown below:
    \begin{center}
    \begin{tabular}{c|c|c}
        \textbf{$\succsim_a$} & \textbf{$\succsim_b$} & \textbf{$\succsim_c$} \\
        \hline
        1 & 1 & 2 \\
        2 & 2 & 1 \\
        3 & 3 & 3 \\
    \end{tabular}
    \end{center}
   
    All students are expected utility maximizers. The types $U_2, U_3$ for students 2 and 3 are known with certainty, and student 1 is one of the two types $U_1^1, U_1^2$ with probability $1/5, 4/5$ respectively. The student types $U$ and corresponding preferences $R$ are as follows:
    \begin{multicols}{2}
        \begin{center}
        \begin{tabular}{c|c|c|c|c|}
             \multicolumn{1}{c}{} & \multicolumn{1}{c}{$U_1^1$}  & \multicolumn{1}{c}{$U_1^2$} & \multicolumn{1}{c}{$U_2$} & \multicolumn{1}{c}{$U_3$}\\\cline{2-5}
              $a$ & 1 & 1 & 2 & 1 \\\cline{2-5}
              $b$ & 3 & 2 & 3 & 3 \\\cline{2-5}
              $c$ & 2 & 3 & 1 & 2 \\\cline{2-5}
              $i$ & 0 & 0 & 0 & 0 \\\cline{2-5}
        \end{tabular}
        \end{center}

        \columnbreak
        \begin{center}
        \begin{tabular}{c|c|c|c}
             \textbf{$R_1^1$} & \textbf{$R_1^2$} & \textbf{$R_2$} & \textbf{$R_3$}\\
            \hline
            $b$ & $c$ & $b$ & $b$\\
            $c$ & $b$ & $a$ & $c$\\
            $a$ & $a$ & $c$ & $a$\\
            1 & 1 & 2 & 3\\
        \end{tabular}
        \end{center}
    \end{multicols}

    Here, we allow each student's strategy space to be a mapping from his type to a preference in $\mathcal{R}$. Thus, we consider the solution concept of Bayesian Nash equilibrium in this example.

    In the dominant-strategy equilibrium of the DA mechanism, where everyone reports his type truthfully, the outcomes are
    $$DA(R^1_1, R_2, R_3) =
    \begin{pmatrix}
    1 & 2 & 3 \\
    b & a & c \\
    \end{pmatrix},
    \text{ }
    DA(R^2_1, R_2, R_3) =
    \begin{pmatrix}
    1 & 2 & 3 \\
    c & b & a \\
    \end{pmatrix},$$ with respect to two realizations of student 1's type.
   
    Considering the preference revelation game induced by the TDA mechanism, one Bayesian Nash equilibrium is shown below, where the reporting behind $i$ is omitted.
    \begin{center}
        \begin{tabular}{c|c|c|c}
            \textbf{$Q_1^1$} & \textbf{$Q_1^2$} & \textbf{$Q_2$} & \textbf{$Q_3$} \\
            \hline
            $b$ & $c$ & $b$ & $b$\\
            1 & 1 & $c$ & $c$\\
             &  & 2 & $a$\\
             &  &  & 3\\
        \end{tabular}
    \end{center}
    The corresponding TDA outcomes are
    $$TDA(t)(Q_1^1, Q_2, Q_3) =
    \begin{pmatrix}
    1 & 2 & 3 \\
    b & c & a \\
    \end{pmatrix},
    \text{ }
    TDA(t)(Q_1^2, Q_2, Q_3) =
    \begin{pmatrix}
    1 & 2 & 3 \\
    c & b & a \\
    \end{pmatrix},$$ with respect to two realizations of student 1's type.

    In fact, the four matchings we derived under the DA and TDA mechanisms are the unique Nash equilibrium outcomes, respectively.

    First, we show that \Cref{thm:nested,thm:implementation} do not extend to this setting. First, note that the unique TDA equilibrium outcome $TDA(t)(Q_1^1, Q_2, Q_3)$ is not stable with respect to $(R^1_1, R_2, R_3)$, as $(2, a)$ forms a blocking pair - a failure of part 1 of \Cref{thm:nested}. Second, the priority profile satisfies within-tier acyclicity - a failure of \Cref{thm:implementation}. Third, the Nash equilibrium outcome of the TDA mechanism $TDA(t)(Q_1^1, Q_2, Q_3)$ is not an equilibrium outcome under the DA mechanism $DA(R_1^1, R_2, R_3)$ - a failure of part 2 of \Cref{thm:nested}.

    Second, we show that the TDA mechanism may not improve the quality of students admitted by top-tier schools as intended.
    Notice that in this example, under type $U^2_1$, $DA(R^2_1, R_2, R_3)$ and $TDA(t)(Q_1^2, Q_2, Q_3)$ are the same. However, under another realization $U^1_1$, the top-tier school $a$ is getting low-priority student 3 instead of its original achievable student 2. Thus, school $a$ is worse off.$\hfill\square$
\end{example}

\begin{example}[Priority uncertainty without alignment in priorities]\label{exp:prioun2}
    There are three students, $1, 2, 3$, and three schools, $a, b, c$, each of which has one seat. School $c$ belongs to tier 1, and the rest schools are in tier 2. Suppose all students are expected utility maximizers, and the following are their types, $U_1, U_2, U_3$, and preferences, $R_1, R_2, R_3$:
    \begin{multicols}{2}
        \begin{center}
        \begin{tabular}{c|c|c|c|}
             \multicolumn{1}{c}{} & \multicolumn{1}{c}{$U_1$}  & \multicolumn{1}{c}{$U_2$} & \multicolumn{1}{c}{$U_3$} \\\cline{2-4}
              $a$ & 3 & 3 & 2  \\\cline{2-4}
              $b$ & 2 & 1 & 3 \\\cline{2-4}
              $c$ & 1 & 2 & 1 \\\cline{2-4}
              $i$ & 0 & 0 & 0 \\\cline{2-4}
        \end{tabular}
        \end{center}

        \columnbreak
        \begin{center}
        \begin{tabular}{c|c|c}
             \textbf{$R_1$} & \textbf{$R_2$} & \textbf{$R_3$}\\
            \hline
            $a$ & $a$ & $b$\\
            $b$ & $c$ & $a$\\
            $c$ & $b$ & $c$\\
            1 & 2 & 3\\
        \end{tabular}
        \end{center}
    \end{multicols}

    There is no uncertainty regarding school $b$ and $c$'s priorities $\succsim_b, \succsim_c$, while there are two types of priority $\succsim_a, \succsim_a^\prime$ for school $a$ with probability $1/5, 4/5$ respectively.
     \begin{center}
        \begin{tabular}{c|c|c|c}
            \textbf{$\succsim_a$} & \textbf{$\succsim_a^\prime$} & \textbf{$\succsim_b$} & \textbf{$\succsim_c$}\\
            \hline
            1 & 2 & 1 & 1\\
            2 & 1 & 2 & 2\\
            3 & 3 & 3 & 3\\
        \end{tabular}
    \end{center}

    As in \Cref{exp:prioun}, in this preference revelation game, we restrict each player's strategy space to $\mathcal{R}$.

    In the dominant-strategy equilibrium of the DA mechanism, where everyone reports truthfully, the outcomes are
    $$DA(q, \succsim, R) = \begin{pmatrix}
    1 & 2 & 3 \\
    a & c & b \\
    \end{pmatrix},
    \text{ }
    DA(q, \succsim^\prime_a, \succsim_{-a}, R) =
    \begin{pmatrix}
    1 & 2 & 3 \\
    b & a & c \\
    \end{pmatrix},$$ with respect to two realizations of the priority profile.

    Considering the preference revelation game induced by the TDA mechanism, one Nash equilibrium is shown below, where the reporting behind $i$ is omitted.
    \begin{center}
        \begin{tabular}{c|c|c}
            \textbf{$Q_1$} & \textbf{$Q_2$} & \textbf{$Q_3$} \\
            \hline
            $a$ & $a$ & $b$\\
            $b$ & $b$ & $a$\\
            1 & 2 & $c$\\
             &  & 3\\
        \end{tabular}
    \end{center}
    The corresponding TDA outcomes are
    $$
    TDA(t)(q, \succsim, Q) = \begin{pmatrix}
    1 & 2 & 3 \\
    a & b & c \\
    \end{pmatrix},
    \text{ }
    TDA(t)(q, \succsim^\prime_a, \succsim_{-a}, Q) =
    \begin{pmatrix}
    1 & 2 & 3 \\
    b & a & c \\
    \end{pmatrix},
    $$ with respect to two realizations of the priority profile.

    Notice that in this example, under the realization $\succsim^\prime_a$, $DA(q, \succsim^\prime_a, \succsim_{-a}, R)$ and $TDA(q, \succsim^\prime_a, \succsim_{-a}, Q)$ are the same. However, under another realization $\succsim_a$, the top-tier school $c$ is getting low-priority student 3 instead of its achievable student 2. Thus, school $c$ is worse off.$\hfill\square$
\end{example}

\newpage
\bibliographystyle{apalike}
\bibliography{main}

\end{document}